\documentclass[aps,pra,twocolumn,superscriptaddress,10pt,nofootinbib]{revtex4-2}
\usepackage{multirow,amsthm,amssymb,amsbsy,amsmath,epsfig,epstopdf,bm,float}
\DeclareMathOperator{\Tr}{Tr}
\usepackage{enumitem}
\usepackage{graphicx}
\usepackage{booktabs}
\usepackage{amsmath}
\usepackage{verbatim}
\usepackage{dcolumn}
\usepackage{color}
\usepackage[dvipsnames]{xcolor}
\usepackage{mathtools}
\usepackage[normalem]{ulem}
\usepackage{soul}
\usepackage{hyperref}
%\usepackage[notocbib]{apacite}
%\colorlet{darkred}{red!85!black}
%\colorlet{darkgreen}{green!50!black}
%\colorlet{darkblue}{blue!60!black}
%\usepackage[
%colorlinks   = true, %Colours links instead of ugly boxes
%urlcolor     = darkgreen, %Colour for external hyperlinks
%linkcolor    = darkblue, %Colour of internal links
%citecolor   = darkred    %Colour of citations
%]{hyperref}
\setcounter{MaxMatrixCols}{20}
\usepackage{braket}
\usepackage[export]{adjustbox}
\DeclarePairedDelimiter\abs{\lvert}{\rvert}%

\newcommand{\R}{\mathbb{R}}
\newcommand{\C}{\mathbb{C}}
\newcommand{\N}{\mathbb{N}}
% expected value

\newcommand\restr[2]{{% we make the whole thing an ordinary symbol
  \left.\kern-\nulldelimiterspace % automatically resize the bar with \right
  #1 % the function
  \vphantom{\big|} % pretend it's a little taller at normal size
  \right|_{#2} % this is the delimiter
  }}
  
\theoremstyle{definition}
\newtheorem{definition}{Definition}

\theoremstyle{remark}

\theoremstyle{prop}
\newtheorem{prop}{Proposition}

\newtheorem{thm}{Theorem}

\begin{document}

\title{Strong symmetries in collision models and physical dilations of covariant quantum maps
}

\author{Marco Cattaneo}
\email{marco.cattaneo@helsinki.fi}
\affiliation{QTF Centre of Excellence,  
Department of Physics, University of Helsinki, P.O. Box 43, FI-00014 Helsinki, Finland}

\date{\today}

\begin{abstract}
Quantum maps are fundamental to quantum information theory and open quantum systems. \textit{Covariant} or \textit{weakly symmetric} quantum maps, in particular, play a key role in defining quantum evolutions that respect thermodynamics, establish free operations in resource theories, and are consistent with transformations of quantum reference frames. To implement quantum maps in the lab, one typically engineers a physical dilation, which corresponds to a unitary evolution entangling the system with an \textit{environment}. This work systematically explores how weak symmetries of quantum maps manifest in their dilations. We demonstrate that for various classes of physical dilations, including Hamiltonian-driven dilations and short-time collision models that simulate Markovian open quantum dynamics, weak symmetries always lead to \textit{strong} symmetries in the dilated evolution, resulting in conserved quantities in the system-environment space. We also characterize the subspace where these symmetries arise using Krylov subspaces. Moreover, we show that some different types of physical dilations have no constraints on the dilated evolution, requiring no strong symmetry.  Finally, we complement our findings with a variety of illustrative and pedagogical examples. Our results provide essential guidelines for constructing physical dilations of quantum maps, offering a comprehensive understanding of how symmetries shape their implementations in a laboratory or on a quantum computer.
\end{abstract}
\maketitle

\section{Introduction}
\label{sec:intro}
Quantum maps are the building blocks of the theories of quantum information and open quantum systems \cite{breuer2002theory,rivas2012open}, providing the most general framework for describing how quantum states evolve over time or transform under physical operations \cite{nielsenchuang,Holevo2012}. While quantum maps are often defined from abstract principles, the state transformations and evolutions we can control in the laboratory are typically unitary, at least in ideal conditions. Examples include physical dynamics governed by the Schrödinger equation, photons traveling through the optical table in quantum optics labs \cite{mandel1995optical}, or quantum algorithms executed on a quantum computer, which are constructed through sequences of unitary gates \cite{nielsenchuang}. Therefore, the implementations of general non-unitary quantum maps in the lab usually involves \textit{physical dilations}, which means that the quantum system on which the map acts must be coupled with another subsystem, which we term \textit{environment}, in such a way that their dynamics is driven by a unitary evolution in the total system-environment space that we can easily realize experimentally. The theory of dilations of quantum maps is based on a celebrated mathematical result known as \textit{Stinespring's theorem} \cite{Stinespring1955,Paulsen2003}, which has been successfully applied to quantum systems. As a result, implementing physical dilations of any quantum map has become a well-established procedure \cite{nielsenchuang,Caruso2006}.

The quantum simulation of open systems is probably the simplest example of physical dilations that are of utmost importance for current quantum technologies \cite{Lloyd2001,Koniorczyk2006,Muller2011,Rybar2012,Dive2015,Cleve2017,Hu2020,DelRe2020,Kamakari2021,Schlimgen2021,Cattaneo2021,Pocrnic2023,DiBartolomeo2023,Ding2024,Delgado-Granados2024}. Dilation-based simulations of open quantum systems have been successfully realized on both analog experimental platforms \cite{Barreiro2011,Schindler2013,Xin2017,Han2021} and digital quantum computers \cite{Wei2018,Garcia-Perez2020,Burger2022,Erbanni2023,Cattaneo2023,Kamakari2021,Hu2020,David2024,DelRe2020}. A particularly relevant dilation-based method for simulating open quantum systems is the use of collision models \cite{Campbell2021a, Ciccarello2021, Cattaneo2022d,Cusumano2022}, which involve repeated interactions between the system and independent particles from the environment. It can be shown that they are a universal procedure to simulate the Markovian dynamics \cite{breuer2002theory,rivas2012open,Alicki2007} of any structured open quantum system by using a finite-dimensional environment \cite{Cattaneo2021,Pocrnic2023}. In recent years, collision models have gained significant importance in quantum information and open systems due to their broad applications in quantum thermodynamics \cite{Barra2015, Strasberg2017, DeChiara2018a}, the study of non-Markovianity \cite{Ciccarello2013a, Vacchini2014, Kretschmer2016a}, and various other fields, as detailed for instance in the references of~\cite{Ciccarello2021, Cattaneo2021}. 

\begin{figure*}
    \centering
    \includegraphics[scale=0.55]{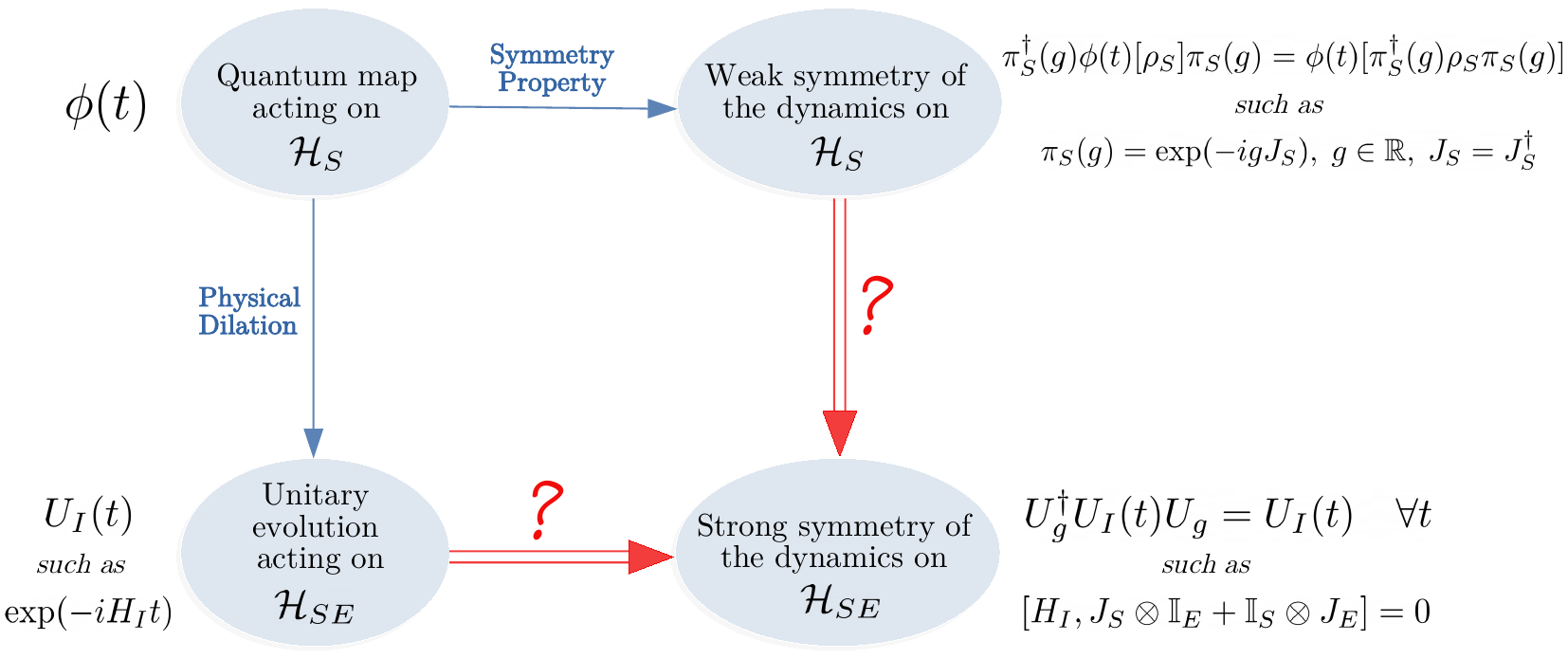}
    \caption{Conceptual framework of our research question (the notation used in the figure is introduced in Sec.~\ref{sec:background}). The map $\phi(t)$ acts on the system's Hilbert space $\mathcal{H}_S$ and is covariant with respect to a representation $\pi_S(g)$ of a symmetry group $G$. For example, if $G = (\mathbb{R},+)$, the symmetry representation is $\pi_S(g) = \exp(-i g J_S)$, where $J_S$ is a Hermitian operator of the system. A generic physical dilation of $\phi(t)$ can be constructed by introducing the environment's Hilbert space $\mathcal{H}_E$, the environment's initial state $\rho_E$, and the unitary evolution operator $U_I(t)$ acting on the joint space $\mathcal{H}_{SE} = \mathcal{H}_S \otimes \mathcal{H}_E$. Then, the map is expressed as $\phi(t) = \Tr_E[U_I(t)\rho_S \otimes \rho_E U_I^\dagger(t)]$. For instance, for dilations based on time-independent Hamiltonians, we have $U_I(t) = \exp(-i H_I t)$. The red arrows with question marks indicate the open problem we aim to address: given any dilation of $\phi(t)$ as described above, does a weak symmetry of $\phi(t)$ necessarily imply a strong symmetry of the dynamics on $\mathcal{H}_{SE}$? Specifically, for dilations involving time-independent Hamiltonians and $G = (\mathbb{R},+)$, can a conserved quantity always be found in the form $J_S \otimes \mathbb{I}_E + \mathbb{I}_S \otimes J_E$?}
    \label{fig:figure}
\end{figure*}

Symmetries play a central role in physics and are equally fundamental in the theory of quantum maps.  For instance, covariant quantum maps---those that respect certain symmetry principles---are useful to describe quantum information processes within the theory of quantum reference frames \cite{Bartlett2007,Gour2008,Boileau2008,Krumm2020} and quantum catalysis \cite{PhysRevLett.128.240501,lipka2024catalysis}. Moreover, symmetric maps can be used to describe the set of allowed operations within a quantum resource theory \cite{Chitambar2019}, and they impose very general constraints on how states, measurements, and channels transform under symmetric operations \cite{Marvian2008,Marvian2013,Ahmadi2013,Marvian2014,Marvian2014a,Marvian2014b,Marvian2016,Yamaguchi2023}. A prominent example of this is the characterization of quantum coherence as a resource \cite{Baumgratz2014}, which is rigorously explored within this framework \cite{Marvian2014a,Marvian2016a}, with crucial consequences for the  theory of quantum thermodynamics \cite{Lostaglio2015,Lostaglio2015a,Lostaglio2017,Lostaglio2019,Lostaglio2021}. Moreover, symmetries are essential for understanding the dynamics of open quantum systems \cite{Holevo1993,Holevo1996,Baumgartner2008,Buca2012,Albert2014}, particularly in the study of dissipative quantum phase transitions \cite{Minganti2018} and other phenomena that depend on the spectral properties of the generator of the dynamics \cite{Cattaneo2021b}.

Unlike in closed quantum systems, where Noether's theorem \cite{goldsteinCM} guarantees that every symmetry corresponds to a conserved quantity, symmetries in open systems and general quantum maps do not always imply conservation laws \cite{Buca2012,Marvian2014,Albert2014,Cirstoiu2020}. In these cases, we refer to \textit{weak symmetries}\footnote{Note that in this paper we will use the terms \textit{weakly symmetric} and \textit{covariant} (quantum maps) interchangeably.}, as opposed to \textit{strong symmetries}, which always lead to conserved quantities, as for the case of  unitary evolutions.

In this work, we explore the relationship between the theory of symmetries in quantum maps and open systems, including collision models, and the theory of physical dilations of quantum maps. To our knowledge, the only relevant result currently available in the literature is a theorem that guarantees the existence of a strongly symmetric dilation for every weakly symmetric quantum map \cite{Scutaru1979,Keyl1999,MarvianMashhad2012,Faist2021}. While this result is highly significant and has been widely utilized in recent years, it does not provide insights into the general constraints that a weak symmetry in a quantum map imposes on \textit{any} physical dilation of that map.

Understanding these constraints is crucial not only for theoretical reasons, such as for quantum resource theories and catalysis in quantum information, where dilations of covariant maps may play a significant role \cite{Chitambar2019,lipka2024catalysis}, but also for practical applications. As we briefly mentioned before, implementing physical dilations of quantum maps and open systems is increasingly important in quantum technologies, both for analog platforms and quantum computers. Consequently, it is essential to determine the rules governing the realization of dilations for covariant quantum maps. Specifically, given a weak symmetry in a map, must we always construct a strongly symmetric physical dilation? In other words, will there always be a conserved quantity in the dilated dynamics, or is it possible to break the symmetry at the level of the dilation? How does the structure of the dilation influence the result, particularly if we consider the very common case of time-dependent quantum maps? Fig.~\ref{fig:figure} offers a visual representation of our central research question.

A different way to frame this problem is as follows: consider a resource theory based on weakly symmetric quantum maps (see, e.g., \cite{MarvianMashhad2012,Marvian2016,Gour2008,Lostaglio2015,Chitambar2019}). Now, if we construct generic physical dilations for these quantum maps, can we always assert the presence of a strong symmetry at the level of the dilation? If so, does this symmetry map the resource theory at the level of quantum maps to a new, yet related, resource theory at the level of the dilations, driven by the emerging strong symmetry?

In this paper, we address the aforementioned questions by examining various types of time-dependent physical dilations of quantum dynamical maps, which are commonly encountered in quantum technologies and open systems. Specifically, we explore dilations driven by unitary operators generated by both time-independent and time-dependent Hamiltonians, the short-time limit of collision models, and dilations involving continuous, time-dependent unitary operators that do not form a one-parameter group. Additionally, we consider generic time-independent dilations for time-independent quantum maps, which include collision models with arbitrary timesteps. For each scenario, we analyze how the weak symmetry of the quantum map manifests at the level of its dilation. Furthermore, we provide a plethora of examples to support our theoretical results, which we believe are critical for gaining a deeper understanding of these concepts. 

The paper is structured as follows. In Sec.~\ref{sec:background} we introduce the formal definitions of quantum maps, dilations, and symmetries, along with the necessary mathematical tools for our analysis. Readers already familiar with these concepts may choose to skip this section. In Sec.~\ref{sec:defining}, we formally define our research questions and objectives, using the framework introduced in Sec.~\ref{sec:background}. Sec.~\ref{sec:results} presents our findings for the various types of physical dilations discussed earlier, with corresponding examples detailed in Sec.~\ref{sec:examples}, which include also a new result on the covariant dilation of the Landau-Streater channel \cite{Landau1993,Filippov2019}. Finally, in Sec.~\ref{sec:conclusions} we draw some concluding remarks.

\section{Background and formalism}
\label{sec:background}
In this section we introduce the necessary theoretical preliminaries for the formal description of symmetries and dilations of quantum maps. In this work, we restrict ourselves to finite-dimensional Hilbert spaces of the \textit{system} $\mathcal{H}_S$. We are interested in the evolution of the states of the system, which in general terms can be described by a \textit{quantum map} (alternatively called also \textit{quantum operation}, \textit{quantum process}, or \textit{quantum channel}) \cite{nielsenchuang,Holevo2012,Bengtsson2017}:
\begin{definition}[Quantum map] A quantum map $\phi$ is a linear, trace-preserving, and completely positive operator acting on the space of density matrices of the system $\mathcal{S}(\mathcal{H}_S)$:
\begin{equation}
\label{eqn:defQuantumMap}
    \phi: \mathcal{S}(\mathcal{H}_S)\rightarrow \mathcal{S}(\mathcal{H}_S).
\end{equation}
\end{definition}

If we are studying the system dynamics $\rho_S(t)$ continuously in time, its evolution at time $t$ is given by:
\begin{equation}
\label{eqn:actionQuantumMap}
    \rho_S(t)=\phi(t)[\rho_S(0)],
\end{equation}
where without losing generality we assume that the system dynamics starts at $t=0$. In this scenario, we usually refer to $\phi(t)$ as a \textit{quantum dynamical map}.

\begin{definition}[Dual map] 
Given any quantum map $\phi$, we can define the \textit{dual map}  $\phi^\dagger:\mathcal{B}(\mathcal{H}_S)\rightarrow\mathcal{B}(\mathcal{H}_S)$, where $\mathcal{B}(\mathcal{H}_S)$ is the space of bounded operators on $\mathcal{H}_S$, such that \cite{Holevo2012}:
\begin{equation}
\label{eqn:defDualMap}
    \Tr_S[\phi[\rho_S]A]=\Tr_S[\rho_S\phi^\dagger[A]],
\end{equation}
for all $\rho_S\in\mathcal{S}(\mathcal{H}_S)$ and for all $A\in\mathcal{B}(\mathcal{H}_S)$.
\end{definition}

It can be shown that $\phi^\dagger$ is linear, positive, and unital, meaning $\phi^\dagger[\mathbb{I}_S]=\mathbb{I}_S$. While the quantum dynamical map $\phi(t)$ drives the system dynamics in Schrödinger picture according to Eq.~\eqref{eqn:actionQuantumMap}, the associated dual map evolves the system observables in Heisenberg picture according to $  A(t) = \phi^\dagger(t)[A(0)]$.

The time evolution operator $U(t)$ satisfying the Schrödinger equation for an isolated quantum system can be described through a quantum map acting as $\phi(t)[\rho_S]=U(t)\rho_S U^\dagger(t)$. However, not all quantum maps are unitary. Quantum maps offer a more general framework for characterizing the dynamics of quantum systems, particularly in scenarios where the system is not isolated, such as when it interacts with an external environment or undergoes a non-selective measurement process. This naturally leads to the question: given a quantum map $\phi$, is it always possible to find a unitary evolution in an enlarged Hilbert space that reproduces the dynamics described by $\phi$ when restricted to the system's Hilbert space? This idea is central to the theory of dilations of quantum maps.

\subsection{Dilated quantum maps}
\label{sec:dilationMap}
\subsubsection{Stinespring's dilation theorem}
\label{sec:stinespring}
The cornerstone of the theory of dilated quantum maps is a formal result introduced by Stinespring in 1955 \cite{Stinespring1955}. Here we present Stinespring's theorem skipping the most formal details, which the interested readers can find in Appendix~\ref{sec:FormalStinespring}.

\begin{thm}[Stinespring's theorem  \cite{Paulsen2003}]
Given any dual map $\phi^\dagger$ acting on $\mathcal{B}(\mathcal{H}_S)$, we can always find a Hilbert space $\mathcal{H}_{SE}=\mathcal{H}_S\otimes\mathcal{H}_E$ and an isometry $V:\mathcal{H}_S\rightarrow\mathcal{H}_{SE}$ satisfying $V^\dagger V=\mathbb{I}_S$, such that, for all $A\in\mathcal{B}(\mathcal{H}_S)$,
\begin{equation}
    \label{eqn:stinespring}
    \phi^\dagger[A]=V^\dagger A\otimes\mathbb{I}_E V.
\end{equation}
\end{thm}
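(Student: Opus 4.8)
The plan is to run the GNS-type construction that is standard for Stinespring's theorem, specialized to the finite-dimensional, unital, completely positive map $\phi^\dagger$ (complete positivity of $\phi^\dagger$ is inherited from that of $\phi$, and unitality was noted above). First I would form the algebraic tensor product $\mathcal{B}(\mathcal{H}_S)\otimes\mathcal{H}_S$ and equip it with the sesquilinear form fixed on simple tensors by
\[
  \langle A\otimes\xi,\,B\otimes\eta\rangle \;:=\; \langle\xi,\,\phi^\dagger[A^\dagger B]\,\eta\rangle_{\mathcal{H}_S},
\]
extended (anti)linearly. The crucial step, which I expect to be the actual heart of the argument, is that this form is positive semidefinite: on a general element $\sum_i A_i\otimes\xi_i$ its quadratic form equals $\sum_{i,j}\langle\xi_i,\phi^\dagger[A_i^\dagger A_j]\,\xi_j\rangle$, and this is nonnegative because the block matrix $(A_i^\dagger A_j)_{i,j}$ is positive in $M_n(\mathcal{B}(\mathcal{H}_S))$ (it equals $R^\dagger R$ with $R=(A_1,\dots,A_n)$ regarded as an operator $\mathcal{H}_S^{\oplus n}\to\mathcal{H}_S$), so complete positivity of $\phi^\dagger$ sends it to a positive matrix. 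This is the only point at which complete positivity, rather than mere positivity, is needed.

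Next I would quotient $\mathcal{B}(\mathcal{H}_S)\otimes\mathcal{H}_S$ by the null space $\mathcal{N}=\{x:\langle x,x\rangle=0\}$; since $\mathcal{H}_S$ is finite-dimensional, the quotient $\mathcal{K}$ is automatically a (finite-dimensional) Hilbert space, with no completion needed. On $\mathcal{K}$ define $\pi(A)$ by $\pi(A)\,[B\otimes\xi]=[AB\otimes\xi]$ and dispatch the routine checks: well-definedness on classes (for $x\in\mathcal{N}$ one has $0\le\langle\pi(A)x,\pi(A)x\rangle\le\|A\|^2\langle x,x\rangle=0$, using that $\phi^\dagger\otimes\mathrm{id}_n$ is positive hence order-preserving and $(B_i^\dagger A^\dagger A B_j)\le\|A\|^2(B_i^\dagger B_j)$), linearity in $A$, multiplicativity, $*$-preservation, and unitality---so $\pi$ is a unital $*$-representation of $\mathcal{B}(\mathcal{H}_S)$ on $\mathcal{K}$. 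Define $V:\mathcal{H}_S\to\mathcal{K}$ by $V\xi=[\mathbb{I}_S\otimes\xi]$. A one-line computation gives $\langle V\xi,\pi(A)V\eta\rangle=\langle\xi,\phi^\dagger[A]\,\eta\rangle$, hence $V^\dagger\pi(A)V=\phi^\dagger[A]$; setting $A=\mathbb{I}_S$ and using $\phi^\dagger[\mathbb{I}_S]=\mathbb{I}_S$ yields $V^\dagger V=\mathbb{I}_S$, so $V$ is an isometry.

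It remains to replace the abstract representation $(\pi,\mathcal{K})$ by the concrete product form in the statement, and here finite-dimensionality of $\mathcal{H}_S$ enters once more. Since $\mathcal{B}(\mathcal{H}_S)\cong M_d(\C)$ is a full matrix algebra, the structure theory of its $*$-representations says every unital $*$-representation is unitarily equivalent to a multiple of the defining one: with matrix units $e_{ij}$, the subspaces $\pi(e_{ii})\mathcal{K}$ are mutually unitarily identified by the $\pi(e_{i1})$, so there exist a Hilbert space $\mathcal{H}_E$ and a unitary $W:\mathcal{K}\to\mathcal{H}_S\otimes\mathcal{H}_E$ with $W\pi(A)W^\dagger=A\otimes\mathbb{I}_E$ for all $A$. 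Replacing $V$ by $WV$ (still an isometry) gives $\phi^\dagger[A]=V^\dagger\,(A\otimes\mathbb{I}_E)\,V$, which is the claim. I would close by remarking that a quicker route is available once one has the operator-sum form $\phi^\dagger[A]=\sum_k K_k^\dagger A K_k$ with $\sum_k K_k^\dagger K_k=\mathbb{I}_S$: take $\mathcal{H}_E=\C^r$ with $r$ the number of Kraus operators and $V\xi=\sum_k (K_k\xi)\otimes\ket{k}$, whereupon $V^\dagger V=\sum_k K_k^\dagger K_k=\mathbb{I}_S$ and $V^\dagger(A\otimes\mathbb{I}_E)V=\phi^\dagger[A]$ by inspection; but since that operator-sum decomposition is itself typically obtained from a finite-dimensional Choi/Stinespring argument, I would present the GNS construction above as the self-contained proof, the two potential obstacles being the positivity step and the correct invocation of the matrix-algebra representation theorem.
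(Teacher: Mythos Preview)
Your argument is correct and is precisely the standard GNS-type proof of Stinespring's theorem (as in the cited Paulsen reference), including the final passage from the abstract representation $\pi$ to the concrete tensor form $A\otimes\mathbb{I}_E$ via the representation theory of $M_d(\C)$. The paper itself does not supply a proof: it states the theorem with a citation to \cite{Paulsen2003} and, in Appendix~\ref{sec:FormalStinespring}, only records the more general formulation with a unital $*$-homomorphism $\eta$ and then asserts (again by citation) that $\eta(A)=A\otimes\mathbb{I}_E$ up to unitaries. Your write-up therefore fills in exactly what the paper defers to the literature, and your closing Kraus-operator shortcut is the natural finite-dimensional alternative.
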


While the above result may sound quite abstract from a physical perspective, it puts forward the idea that any quantum map can always be described through a dilation $\mathcal{D}=(\mathcal{H}_{SE},V)$ in a larger Hilbert space $\mathcal{H}_S\otimes\mathcal{H}_E$ and an operation $V$ that preserves distances. From now on, we will refer to $\mathcal{H}_E$ as the Hilbert space of the \textit{environment}.

The consequences of Stinespring's theorem for the theory of quantum information cannot be overstated, as we will demonstrate in the next section. Before we proceed, however, we introduce another useful definition.

\begin{definition}[Minimal Stinespring dilation] A Stinespring dilation $\mathcal{D}=(\mathcal{H}_{SE},V)$ is \textit{minimal} \cite{Paulsen2003} if
\begin{equation}
\begin{split}
    &Span\{A\otimes\mathbb{I}_E V\ket{\psi_S},\, \forall A\in\mathcal{B}(\mathcal{H}_S),\,\forall \ket{\psi_S}\in\mathcal{H}_S\}\\
    &=\mathcal{H}_{SE}.
\end{split}
\end{equation}
In other words, minimal dilations ``cover'' the full enlarged Hilbert space $\mathcal{H}_{SE}$ if we compose the action of the isometry on $\mathcal{H}_S$ with any possible system operator. 
\end{definition}

If a Stinespring dilation is not minimal, then we can always find a minimal one by restricting the codomain of $V$ on a smaller $\mathcal{H}'_{SE}\subset\mathcal{H}_{SE}$, with $\mathcal{H}_{SE}'=Span\{A\otimes\mathbb{I}_E V\ket{\psi_S},\, \forall A\in\mathcal{B}(\mathcal{H}_S),\,\forall \ket{\psi_S}\in\mathcal{H}_S\}$, i.e., we ``remove'' the sectors of the enlarged Hilbert space that are not reached by the $Span$ defined above. 

It turns out that all minimal Stinespring dilations are unitarely equivalent, in the sense that if $\mathcal{D}_1=(\mathcal{H}_{SE}^{(1)},V_1)$ and $\mathcal{D}_2=(\mathcal{H}_{SE}^{(2)},V_2)$ are minimal, then clearly $dim(\mathcal{H}_{SE}^{(1)})=dim(\mathcal{H}_{SE}^{(2)})$, and there exists a unique unitary operator $U$ such that $V_2=U V_1$, $U A\otimes\mathbb{I}_{E^{(1)}} U^\dagger=A\otimes\mathbb{I}_{E^{(2)}}$ for all $A\in\mathcal{B}(\mathcal{H}_S)$ \cite{Paulsen2003}.

\subsubsection{Physical dilations of quantum maps}
\label{sec:physDilation}

In this work we are interested in the physical description of dilated quantum maps, so we want to express Stinespring's theorem in terms of physical entities that can be easily implemented in a lab, such as qubits and unitary evolutions. We can easily observe that, given a Stinespring dilation $\mathcal{D}=(\mathcal{H}_{SE},V)$, we can always find a unitary operator $U_I:\mathcal{H}_{SE}\rightarrow \mathcal{H}_{SE}$ and a fixed state $\ket{\psi_E}\in\mathcal{H}_E$ such that \cite{Caruso2006,Lindblad1976}:
\begin{equation}
\label{eqn:fromStinespringToPhysU}
    V\ket{\varphi_S}=U_I\ket{\varphi_S}\otimes\ket{\psi_E} \text{ for all }\ket{\varphi_S}\in\mathcal{H}_S.
\end{equation}
The readers can verify that $U_I$ is well-defined if $V$ is an isometry. Moreover, given a basis $\{\ket{\varphi_{S,j}}\}_j$ of $\mathcal{H}_S$,
\begin{equation}
    \begin{split}
        &\Tr_S[\phi^\dagger(A)\rho_S]=\Tr_S[V^\dagger A\otimes\mathbb{I}_E V \rho_S]\\
        &=\sum_j \bra{\varphi_{S,j}}\otimes\bra{\psi_E}U_I^\dagger A\otimes \mathbb{I}_E U_I \rho_S\ket{\varphi_{S,j}}\otimes\ket{\psi_E}\\
        &=\Tr_{SE}[U_I^\dagger A\otimes\mathbb{I}_EU_I \rho_S\otimes\ket{\psi_E}\!\bra{\psi_E}]=\Tr_S[A\phi[\rho_S]],
    \end{split}
\end{equation}
where the quantum map is eventually expressed as:
\begin{equation}
    \label{eqn:quantumMapPhysDil}
    \phi[\rho_S]=\Tr_E[U_I\rho_S\otimes\ket{\psi_E}\!\bra{\psi_E}U_I^\dagger].
\end{equation}
Eq.~\eqref{eqn:quantumMapPhysDil} is the standard representation of a dilation of a quantum operation in quantum information \cite{nielsenchuang}, which can be extended to mixed states of the environment $\rho_E\in\mathcal{S}(\mathcal{H}_E)$:
\begin{equation}
    \label{eqn:quantumMapPhysDilMixedState}
    \phi[\rho_S]=\Tr_E[U_I\rho_S\otimes\rho_E U_I^\dagger].
\end{equation}
We refer to Eq.~\eqref{eqn:quantumMapPhysDilMixedState} as a \textit{physical dilation} (or \textit{physical representation} \cite{Caruso2006}) of the quantum map $\phi$, characterized by $(\mathcal{H}_{SE},U_I,\rho_E)$. While any Stinespring dilation generates a physical dilation based on a pure state of the environment through Eq.~\eqref{eqn:fromStinespringToPhysU}, a physical dilation with a mixed state of the environment can be transformed into a Stinespring dilation by purifying $\rho_E$ through an additional ancillary system with Hilbert space $\mathcal{H}_C$ \cite{Caruso2006,nielsenchuang}.

\subsubsection{Quantum collision models}
\label{sec:collisionMod}

The quantum collision models \cite{Campbell2021a,Ciccarello2021,Cattaneo2022d,Cusumano2022} are a specific class of physical dilations that is of particular interest for the simulation of open quantum systems, i.e., quantum systems that interact with an external environment \cite{breuer2002theory,rivas2012open}. A single step of a collision model, lasting for a time $\Delta t$, is typically represented by the map $\phi_{\Delta t}$ written as in Eq.~\eqref{eqn:quantumMapPhysDilMixedState}, with\footnote{Note that throughout the text we assume $\hbar=1$.} $U_I(\Delta t)=\exp(-i H_I \Delta t)$. $H_I$ is some suitable interaction Hamiltonian that entangles the system with a single particle of the environment, which is usually termed \textit{ancilla}, prepared in $\rho_E$. After a single timestep, the ancilla is discarded and never interacts with the system again. In the following timestep, the system interacts with a new ancilla, identical to the previous one, which is again prepared in $\rho_E$. In this way, the discrete evolution of the state of the system can be obtained by repeated applications of the map $\phi_{\Delta t}$:
\begin{equation}
    \label{eqn:evCollMod}
    \rho_S(n\Delta t)=(\phi_{\Delta t})^n[\rho_S(0)].
\end{equation}
The discrete evolution of collision models makes them particularly suitable for the digital quantum simulation of open quantum systems \cite{Cattaneo2022d,Burger2022,Erbanni2023,Cattaneo2023}, which is known to be efficient \cite{Cattaneo2021}. Furthermore, when these models are combined with spatial evolution on a 1D lattice, they can serve as paradigmatic examples of quantum cellular automata \cite{Boneberg2023,Gillman2023}.

An especially interesting case arises in the limit of infinitesimal timestep, as the time derivative of the state of the system can then be written as:
\begin{equation}
\label{eqn:DerivativeCollision}
    \frac{d}{dt}\rho_S(t)= \lim_{\Delta t\rightarrow0^+}\frac{\phi_{\Delta t}[\rho_S(0)]-\rho_S(0)}{\Delta t}.
\end{equation}
Then, we expand the evolution operator up to the second order in $\Delta t$:
\begin{equation}
    \label{eqn:expansionCollModSecondOrder}
    U_I(\Delta t)\approx \mathbb{I}_{SE}-i \Delta t H_I-\frac{\Delta t^2}{2} H_I^2.
\end{equation}
Next, we plug the above equation into the expression for $\phi_{\Delta t}$, neglecting all the orders beyond $\Delta t^2$. If we also apply the standard assumption $\Tr_E[[H_I,\rho_S\otimes\rho_E]]=0$ for all $\rho_S$, then it can be shown that the collision model can describe the generator $\mathcal{L}$ of a \textit{quantum dynamical semigroup} $\exp(\mathcal{L}t)$ \cite{breuer2002theory,rivas2012open,Alicki2007} (see Appendix~\ref{sec:quantumDynSemig} for more details) as $\phi_{\Delta t}\approx \mathcal{I}_S+\Delta t \mathcal{L}$, and the action of the semigroup at any time $t$ is recovered through \cite{Ciccarello2021,Cattaneo2022d}:
\begin{equation}
    \label{eqn:collModSemigroup}
    \exp(\mathcal{L}t)=\lim_{\Delta t\rightarrow 0^+}(\phi_{\Delta t})^n,\; n=t/\Delta t.
\end{equation}
We refer the readers to Appendix~\ref{sec:collLindblad} for further details on the derivation of Eq.~\eqref{eqn:collModSemigroup}. 

While $\mathcal{L}$ constructed as above contains only the dissipative part of the dynamics, an additional unitary Hamiltonian evolution can also be easily included in $\mathcal{L}$ \cite{Landi2014,Lorenzo2017,Ciccarello2021,Cattaneo2022d}. Furthermore, through a suitable composition of different collisions the map $\phi_{\Delta t}$ can efficiently simulate \textit{any} generator $\mathcal{L}$ of a general quantum dynamical semigroup, including collective dissipators \cite{Cattaneo2021}. 

\subsection{Symmetries in quantum maps and open systems}
\label{sec:symmetries}
Recognizing the symmetries in the dynamics of classical or quantum systems is essential for gaining a deep understanding of any physical model. This is true not only for closed Hamiltonian models, in which Noether's theorem guarantees that continuous symmetries correspond to conserved quantities of the dynamics \cite{goldsteinCM}, but also for open systems and, in the context of quantum information, quantum maps. The study of symmetries, also known as \textit{covariances} (as will be clear from \textbf{Definition 4}), in quantum maps dates back nearly as far as the earliest formal investigations into quantum maps themselves \cite{Davies1970,Davies1970a,Scutaru1979}. Today, this topic has gained crucial importance in the fields of open quantum systems, quantum information, and quantum resource theories \cite{Bartlett2007,Vacchini2009,MarvianMashhad2012,Chitambar2019}.

\subsubsection{Definition and physical consequences}
\begin{definition}[Symmetry of a quantum map] Given a quantum map $\phi$ on $\mathcal{S}(\mathcal{H}_S)$, a symmetry group $G$, and a unitary representation thereof $\pi_S:G\rightarrow\mathcal{B}(\mathcal{H}_S)$, we say that $\phi$ is \textit{symmetric} or \textit{covariant} with respect to $G$ if
\begin{equation}
\label{eqn:defCovariance}
\pi_S^\dagger(g)\phi[\rho_S]\pi_S(g)=\phi[\pi_S^\dagger(g)\rho_S\pi_S(g)]
\end{equation}
 for all $\rho_S\in\mathcal{S}(\mathcal{H}_S)$ and $g\in G$. Equivalently,
\begin{equation}
\label{eqn:defSymm}
\mathcal{U}_S^\dagger(g)\phi\mathcal{U}_S(g)=\phi \text{ for all }g\in G,
\end{equation}
where $\mathcal{U}_S(g)[\rho_S]=\pi_S^\dagger(g)\rho_S\pi_S(g)$. For the purposes of this work, the representation $\pi_S$ is finite-dimensional.
\end{definition}

If $G$ is a one-dimensional connected Lie group, i.e., either $(\R,+)$ or $U(1)$, then its representation can be written as a one-parameter unitary group:
\begin{equation}
\label{eqn:phaseCov}
    \pi_S(g)=\exp(-i g J_S),
\end{equation}
where with abuse of notation $g$ is a real number and $J$ is some Hermitian operator acting on $\mathcal{H}_S$. Equivalently, we can define the superoperator\footnote{A superoperator is any linear operator acting on $\mathcal{B}(\mathcal{H}_S)$. In this paper,  superoperators are denoted by calligraphic letters, and any quantum map $\phi$ can also be seen as a superoperator.} $\mathcal{U}_S(g)=\exp(-i g \mathcal{J}_S)$, with
\begin{equation}
    \mathcal{J}_S=[J_S,\cdot],
\end{equation}
where $\cdot$ is a placeholder for a generic density matrix of the system. If $\phi$ is symmetric under $G$, then $[\phi,\mathcal{J}_S]=0$. Furthermore, in the case of a quantum dynamical semigroup $\phi(t)=\exp(\mathcal{L}t)$, we can express the symmetry as 
\begin{equation}
\label{eqn:weakSymmetryLiouvillian}
    [\mathcal{L},\mathcal{J}_S]=0.
\end{equation} 

We are particularly interested in the representation given in Eq.~\eqref{eqn:phaseCov} because it embodies one of the most significant symmetries for quantum maps and open systems, namely \textit{time-translation symmetry}. This symmetry is defined by $J_S = H$ and $g = t$, where $H$ is the system's Hamiltonian. Time-translation symmetry is crucial for defining suitable thermal operations \cite{Lostaglio2015,Lostaglio2017,Lostaglio2018,Lostaglio2019,Mohammady2022} and has been proposed as a requirement for open system dynamics that obey the laws of thermodynamics \cite{Dann2021,Dann2021a,Dann2022}. Additionally, time-translation symmetry with a slightly modified Hamiltonian, $J_S = H_0$, is satisfied by certain master equations for structured open systems, which remain valid in regimes where standard master equations with symmetry given by $J_S = H$ may fail \cite{Dann2021,Cattaneo2020,Trushechkin2021a}.

A particularly significant case of time-translation symmetry for a single qubit is \textit{phase covariance}. For a qubit with the Hamiltonian $H = \frac{\omega}{2} \sigma_z$, phase covariance is defined as the symmetry with $J_S = \sigma_z$ and $g \in [0, 2\pi)$, resulting in a two-dimensional representation of $U(1)$. Phase covariance is satisfied by several paradigmatic quantum dynamical semigroups \cite{Albert2014,Haase2019}, with diverse applications in metrology \cite{Smirne2016,Haase2018,Liuzzo-Scorpo2018} and the study of non-Markovianity \cite{Teittinen2018,Filippov2020}.

Finally, we emphasize that the condition $[\phi, \mathcal{J}_S] = 0$ does not necessarily imply that $J_S$ is a conserved quantity of the dynamics. More generally, suppose that we are interested in a continuous time-dependent quantum evolution described by $\phi(t)$, which is symmetric with respect to $G$ for all $t$. Then, crucially, Noether's theorem does not hold anymore for general open systems whose dynamics is not unitary, and this symmetry does not immediately imply the existence of a conserved quantity \cite{Marvian2014,Albert2014}. This fact is at the basis of the distinction between \textit{weak} and \textit{strong} symmetries \cite{Buca2012,DeGroot2022}.
\begin{definition}[Weak and strong symmetries]
    Suppose that the system dynamics is driven by a map $\phi(t)$ covariant with respect to the action of a group $G$. We say that $\phi(t)$ is \textit{weakly symmetric}\footnote{Alternatively, this is referred to as a \textit{symmetry on the superoperator level} \cite{Albert2014}.} with respect to $G$ if the symmetry does not lead to the emergence of a conserved quantity of the dynamics. On the contrary, if the symmetry is associated with a conserved quantity, we say that $\phi(t)$ is \textit{strongly symmetric}. 
\end{definition}

A symmetry of a Hamiltonian-driven evolution in a closed quantum system, for instance, is a strong symmetry, while symmetries in open systems can be weak. The implications of weak symmetries, as well as the distinct conditions under which they emerge compared to strong symmetries, have been extensively analyzed, particularly in the context of quantum dynamical semigroups \cite{Holevo1993,Holevo1996,Baumgartner2008,Buca2012,Albert2014}. These studies have found applications across a wide range of fields \cite{Manzano2014,Albert2016,Manzano2018,VanCaspel2018,Buca2019,Tindall2020a,Styliaris2019,Nigro2020,Kawabata2023}. 

Weak symmetries not only impose constraints on the structure of the dynamics but can also be leveraged to reduce the complexity of simulating it. For instance, Eq.~\eqref{eqn:weakSymmetryLiouvillian} implies that $\mathcal{L}$ can be block-diagonalized using a basis of eigenvectors of $\mathcal{J}_S$, which is often known \cite{Buca2012,Albert2014,Cattaneo2020,Bellomo2017,Cattaneo2021b,Dorn2021,Vaaranta2022}.

\subsubsection{Covariant Stinespring dilations}
\label{sec:CovStinespring}
After having defined covariant quantum maps, we run into a very natural question: what is the structure of Stinespring dilations for covariant maps? Can we find a refined version of \textbf{Theorem 1} in the presence of (weak) symmetries? To the best of our knowledge, this issue was first addressed by Scutaru \cite{Scutaru1979}, and then by Keyl and Werner \cite{Keyl1999}. Their results are summarized in the following theorem. 

\begin{thm}[Covariant dilations \cite{Keyl1999}]
Given any dual map $\phi^\dagger$ covariant under the action of the representation $\pi_S$ of some group $G$ according to \textbf{Definition 4}, we can always find a Stinespring dilation $\mathcal{D}=(\mathcal{H}_{SE},V)$ and another unitary representation of $G$, $\pi_E:G\rightarrow\mathcal{H}_E$, such that
\begin{equation}
    \label{eqn:stinespringCov}
    V \pi_S(g) = \pi_S(g)\otimes\pi_E(g) V.
\end{equation}
\end{thm}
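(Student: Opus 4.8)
## Proof proposal for Theorem 3 (Covariant dilations)

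\textbf{Overall strategy.} The plan is to build the covariant dilation by hand starting from a \emph{minimal} Stinespring dilation of $\phi^\dagger$, and then to \emph{define} the environment representation $\pi_E(g)$ through its action on the image of the isometry. The key structural fact I will exploit is the rigidity statement recalled just after \textbf{Definition of minimal dilation}: any two minimal Stinespring dilations are related by a unique unitary that intertwines the $A\otimes\mathbb{I}_E$ actions. Covariance of $\phi^\dagger$ will be used precisely to produce, for each $g\in G$, a \emph{second} minimal dilation that is equivalent to the first, and the connecting unitary will turn out to be of the form $\pi_S(g)\otimes\pi_E(g)$ up to the inevitable phase freedom.

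\textbf{Step 1: Reduce to a minimal dilation.} Fix any Stinespring dilation $\mathcal{D}=(\mathcal{H}_{SE},V)$ of $\phi^\dagger$ and, as described in the excerpt, pass to its minimal restriction so that $\mathcal{H}_{SE}=\mathrm{Span}\{A\otimes\mathbb{I}_E\,V\ket{\psi_S}\}$. Minimality is what makes the later definitions unambiguous: an operator on $\mathcal{H}_{SE}$ that commutes appropriately with all $A\otimes\mathbb{I}_E$ and is fixed on $V\mathcal{H}_S$ is determined uniquely.

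\textbf{Step 2: For each $g$, produce a competing minimal dilation.} Define $V_g := \bigl(\pi_S(g)\otimes\mathbb{I}_E\bigr)\,V\,\pi_S^\dagger(g) : \mathcal{H}_S\to\mathcal{H}_{SE}$. One checks directly that $V_g$ is again an isometry, and using the covariance relation $\pi_S^\dagger(g)\phi^\dagger[A]\pi_S(g)=\phi^\dagger[\pi_S^\dagger(g)A\pi_S(g)]$ together with $\phi^\dagger[A]=V^\dagger A\otimes\mathbb{I}_E V$, a short computation gives $V_g^\dagger\,A\otimes\mathbb{I}_E\,V_g = \phi^\dagger[A]$ for all $A$. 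Hence $(\mathcal{H}_{SE},V_g)$ is another Stinespring dilation of the same $\phi^\dagger$; and because conjugating the spanning set by the unitary $\pi_S(g)\otimes\mathbb{I}_E$ does not shrink it, $(\mathcal{H}_{SE},V_g)$ is again minimal.

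\textbf{Step 3: Extract $\pi_E(g)$ from the rigidity of minimal dilations.} By the uniqueness statement for minimal dilations, there is a unique unitary $W_g$ on $\mathcal{H}_{SE}$ with $V_g = W_g V$ and $W_g\,(A\otimes\mathbb{I}_E)\,W_g^\dagger = A\otimes\mathbb{I}_E$ for all $A\in\mathcal{B}(\mathcal{H}_S)$. The commutation condition forces $W_g$ to lie in the commutant of $\mathcal{B}(\mathcal{H}_S)\otimes\mathbb{I}_E$, which is $\mathbb{I}_S\otimes\mathcal{B}(\mathcal{H}_E)$, so $W_g = \mathbb{I}_S\otimes\pi_E(g)$ for some unitary $\pi_E(g)$ on $\mathcal{H}_E$. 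Substituting back into $V_g=W_gV$ and using the definition of $V_g$ gives $\bigl(\pi_S(g)\otimes\mathbb{I}_E\bigr)V\pi_S^\dagger(g) = \bigl(\mathbb{I}_S\otimes\pi_E(g)\bigr)V$, i.e. $V\pi_S(g) = \bigl(\pi_S(g)\otimes\pi_E(g)\bigr)V$, which is Eq.~\eqref{eqn:stinespringCov}. Finally, applying this identity twice and invoking uniqueness of $W_g$ shows $\pi_E(gh)$ and $\pi_E(g)\pi_E(h)$ can differ at most by a phase, so $\pi_E$ is a (possibly projective) representation; for $G=(\R,+)$ or $U(1)$ the phase can be absorbed and one gets an honest representation, which is the only case needed in the sequel.

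\textbf{Main obstacle.} The delicate point is Step 3: justifying that the intertwining unitary between the two minimal dilations not only exists but is \emph{forced} to have the product form $\mathbb{I}_S\otimes\pi_E(g)$, and then controlling the cocycle/phase ambiguity so that $g\mapsto\pi_E(g)$ is genuinely a representation rather than merely a projective one. The commutant argument handles the product form; the phase issue is where one must either restrict to the groups of interest or invoke a standard lifting argument for the relevant $G$.
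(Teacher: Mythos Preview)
Your proposal is correct and follows essentially the same route as the paper's sketch: start from a minimal dilation, build for each $g$ a second minimal dilation, invoke the rigidity/uniqueness of minimal dilations to get an intertwining unitary, and read off its tensor-product form. The only cosmetic difference is bookkeeping: the paper sets $V_g=V\pi_S(g)$ and changes the $*$-homomorphism to $A\mapsto \pi_S(g)A\pi_S^\dagger(g)\otimes\mathbb{I}_E$, so the connecting unitary $U_g$ satisfies $U_g(A\otimes\mathbb{I}_E)U_g^\dagger=\pi_S(g)A\pi_S^\dagger(g)\otimes\mathbb{I}_E$ and hence $U_g=\pi_S(g)\otimes\pi_E(g)$ directly; you instead set $V_g=(\pi_S(g)\otimes\mathbb{I}_E)V\pi_S^\dagger(g)$, keep the $*$-homomorphism fixed, and obtain $W_g=\mathbb{I}_S\otimes\pi_E(g)$ from the commutant. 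These are the same argument up to multiplying through by $\pi_S(g)\otimes\mathbb{I}_E$, and your version has the mild advantage that the commutant identification $W_g\in\mathbb{I}_S\otimes\mathcal{B}(\mathcal{H}_E)$ is immediate. Your closing remark on the possible projective phase is more careful than the paper's sketch, which does not raise the issue.
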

In particular, the proof of this theorem is based on the fact that any minimal dilation (\textbf{Definition 3}) must satisfy Eq.~\eqref{eqn:stinespringCov}. Indeed, if $\mathcal{D}=(\mathcal{H}_{SE},V)$, then $\mathcal{D}_g=(\mathcal{H}_{SE}^{(g)},V_g)$ is also a minimal dilation, where $V_g=V\pi_S(g)$ and $\mathcal{H}_{SE}^{(g)}$ is a change of reference frame in the Hilbert space $\mathcal{H}_{SE}$ under the action of $\pi_S(g)\otimes\mathbb{I}_E$. For all practical purposes, we can think of this change as the replacement of $A\otimes\mathbb{I}_E$ in Eq.~\eqref{eqn:stinespring} with $\pi_S(g)A\otimes\mathbb{I}_E\pi_S^\dagger(g)$. Then, any pair of minimal dilations are unitary equivalent, so there must exist a unique operator $U_g$ such that $V_g=U_g V$ and $U_gA\otimes\mathbb{I}_EU_g^\dagger=\pi_S(g)A\otimes\mathbb{I}_E\pi_S^\dagger(g)$ for all $g\in G$. Then, $U_g= \pi_S(g)\otimes\pi_E(g)$, and any minimal dilation satisfies Eq.~\eqref{eqn:stinespringCov} \cite{Keyl1999}. This result will be crucial for the analysis of symmetries in physical dilations.

Marvian et al. also studied the issue of covariant dilations from a more physical perspective \cite{MarvianMashhad2012,Marvian2016} (see also the related Refs.~\cite{Lostaglio2019,Faist2021,Luijk2023}). In particular, they showed that one can always find a covariant physical dilation with a fixed state of the environment that is invariant with respect to the group action. Their result deals therefore with guaranteeing the existence of such a dilation. Our goal, instead, is to find the most general constraints that a symmetry imposes on any possible dilation of a quantum map.

\section{Defining the problem}
\label{sec:defining}

Having introduced the necessary tools in Sec.~\ref{sec:background}, we are now ready to formally define the problem we aim to address. We may formulate our general question as: \textit{Given a quantum evolution $\phi(t)$, which acts on the density matrices of a finite-dimensional Hilbert space $\mathcal{H}_S$ and is symmetric with respect to the action of a group $G$, what are the most general constraints that this symmetry imposes on any possible physical dilation $(\mathcal{H}_{SE},U_I(t),\rho_E)$ of $\phi(t)$, where $\mathcal{H}_E$ is also finite-dimensional? Does a weak symmetry of the map $\phi(t)$ according to \textbf{Definition 5} necessarily imply a strong symmetry of the unitary evolution governed by $U_I(t)$? How does the structure of $U_I(t)$ influence the results, including the case of time-independent maps with a fixed $U_I$?}

We focus on physical dilations that vary with time but have a fixed Hilbert space for the environment and a constant initial state $\rho_E$. This choice is rooted in experimental practice: when implementing a dilation of an open system dynamics in the laboratory, the most straightforward approach involves fixing the environment's physical system (i.e., $\mathcal{H}_E$) and the initial state $\rho_E$, while smoothly varying the generator of the quantum evolution $U_I(t)$ over time. Indeed, changing $\rho_E$ at each time instant would require modifying the experimental setup for each different $t$. Moreover, the choice of a fixed environment state $\rho_E$ is also motivated theoretically, e.g., as demonstrated by quantum collision models introduced in Sec.~\ref{sec:collisionMod}. Recent implementations of these models on near-term quantum computers \cite{Burger2022,Erbanni2023,Cattaneo2023} illustrate that the setup we are considering is also strongly aligned with current experiments in quantum technologies, including quantum simulations of open systems.

The question we pose has significant implications for both technological applications and quantum information theory. Specifically, understanding the constraints that symmetry imposes on the structure of the dilations of a generic quantum map defines the limitations and requirements for the experimental setups necessary to realize these dilations. Moreover, if a quantum resource theory is characterized by a particular covariance property of the quantum map, then analyzing how this covariance manifests in the dilation can reveal whether a new resource theory emerges at the level of the dilation. Fig.~\ref{fig:figure} provides a schematic visual summary of these concepts and of our central research question.

For the sake of exploring different possibilities for physical dilations that one may implement in the lab, we will consider five different types of $U_I(t)$, including the time-independent case (without losing generality the dynamics starts at $t=0$):
\begin{description}
    \item[Time-independent Hamiltonian evolution] This is the scenario in which we are mostly interested in, together with the collision models. The time evolution is driven by a fixed Hamiltonian $H_I$ as:
    \begin{equation}
    \label{eqn:timeIndHam}
        U_I(t)= \exp(-i H_It ).
    \end{equation}
    \item[Short-time collision models] We consider the short-time limit of collision models discussed in Sec.~\ref{sec:collisionMod}. The time evolution is given by 
    \begin{equation}
    \label{eqn:shortTimeCollTimeEv}
        U_I(\Delta t)=\exp(-i H_I \Delta t) \text{ for }\Delta t\rightarrow 0^+,
    \end{equation} and the quantum map is constructed according to the expansion in Eq.~\eqref{eqn:expansionCollModSecondOrder} and subsequent discussion (see Appendix~\ref{sec:collLindblad} for further details). This corresponds to the short-time limit of Eq.~\eqref{eqn:timeIndHam}. Note that a generic evolution written as Eq.~\eqref{eqn:timeIndHam} with finite-dimensional $\mathcal{H}_E$ cannot simulate a quantum dynamical semigroup \cite{VomEnde2023}. Then, the collision approach is an effective protocol for simulating Markovian master equations using the formalism of time-independent Hamiltonian evolution and finite-dimensional environment.
    \item[Time-dependent Hamiltonian] In this case the Hamiltonian generating the dynamics can vary as a function of time, and the evolution is given by:
    \begin{equation}
    \label{eqn:evOpTimeDepHam}
        U_I(t)=\mathcal{T}\exp\left(-i \int_0^t ds\, H(s) \right),
    \end{equation}
    where $\mathcal{T}$ is the time-ordering symbol \cite{fetterWalecka}. 
    \item[Continuous non-Hamiltonian evolution] We will also consider the quite abstract scenario in which $U_I(t)$ cannot be written as a Hamiltonian-driven evolution, but  is still continuous and derivable as a function of time. In general terms, $U_I(t)$ is not a one-parameter group as a function of time.
    \item[Time-independent map] Finally, we will consider the case in which $\phi$ does not depend on time, making $U_I$ a generic unitary operator without additional specific properties. This scenario encompasses collision models with a generic time step $\Delta t$. 
\end{description}

\section{Results}
\label{sec:results}
In this section we present the major results of our work. While we show some simple proofs here, we also skip the non-trivial details of the derivations, which the interested readers can find in Appendix~\ref{sec:derivation}. 

We consider a time-dependent quantum map $\phi(t)$ acting on a finite-dimensional $\mathcal{H}_S$ that is covariant with respect to the group $G$ and its representation $\pi_S(g)$. The starting point of our discussion is the fact that any minimal Stinespring dilation $\mathcal{D}(t)=(\mathcal{H}_{SE},V(t))$ of $\phi(t)$ must satisfy Eq.~\eqref{eqn:stinespringCov}, for some representation $\pi_E(g)$ of the group $G$ acting on $\mathcal{H}_{E}$. Then, we can construct a physical dilation $(\mathcal{H}_{SE},U_I(t),\ket{\psi_E})$ associated with the Stinespring dilation $\mathcal{D}(t)$ according to the discussion in Sec.~\ref{sec:physDilation}, and Eq.~\eqref{eqn:stinespringCov} becomes:
\begin{equation}
\label{eqn:physDilationCovStarting}
    \pi_S(g)\otimes\pi_E(g,t)U_I(t)\ket{\varphi_S}\otimes\ket{\psi_E}=U_I(t)\pi_S(g) \ket{\varphi_S}\otimes\ket{\psi_E}
\end{equation}
for all $g\in G$ and for all $\ket{\varphi_S}\in\mathcal{H}_S$. Eq.~\eqref{eqn:physDilationCovStarting} is the first step of our analysis. As explained in Sec.~\ref{sec:physDilation}, we can also construct physical dilations with a mixed state of the environment and then connect them to a well-defined Stinespring dilation.

We will now study separately the consequences of this equation for the different types of evolution operators listed in Sec.~\ref{sec:defining}. We primarily focus on the case of time-independent Hamiltonians, as the results obtained can be readily extended or appropriately modified to address the other scenarios. 

\subsection{Time-independent Hamiltonian evolution}
\label{sec:timeIndHamResult}
\subsubsection{Minimal dilations, pure state of the environment}
We first consider minimal physical dilations\footnote{Here we assume that the dilation is minimal for any time $t$ apart from a zero-measure set in $\R$, i.e., apart from some isolated $t^*\in\R$ for which the dilation may be non-minimal, e.g., the trivial case $t^*=0$.} constructed with a time evolution operator driven by a time-independent Hamiltonian $H_I$, given by Eq.~\eqref{eqn:timeIndHam}, and a pure state of the environment $\ket{\psi_E}$. Our first result is a property of the covariance in Eq.~\eqref{eqn:physDilationCovStarting} for time-independent Hamiltonian dilations, which we refer to as the \textit{stationary covariance property}:
\begin{prop}[Stationary covariance property] The representation $\pi_E$ of $G$ acting on the space of the environment in Eq.~\eqref{eqn:physDilationCovStarting} does not depend on time:
\begin{equation}
\label{eqn:statCov}
    \pi_E(g,t)=\pi_E(g)\text{ for all }t.
\end{equation}
\end{prop}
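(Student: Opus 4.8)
The plan is to start from the covariance relation in Eq.~\eqref{eqn:physDilationCovStarting} and exploit the specific structure $U_I(t) = \exp(-i H_I t)$, together with the minimality of the dilation for (almost) all $t$, to pin down the time dependence of $\pi_E(g,t)$. First I would rewrite Eq.~\eqref{eqn:physDilationCovStarting} in the more symmetric operator form obtained from Eq.~\eqref{eqn:stinespringCov}: since the dilation is minimal at time $t$, we have $U_I(t)\,\pi_S(g)\,U_I^\dagger(t) = \pi_S(g)\otimes\pi_E(g,t)$ as an identity on $\mathcal{H}_{SE}$ — this is precisely the statement that the conjugation of $A\otimes\mathbb{I}_E$ by $U_g(t) := \pi_S(g)\otimes\pi_E(g,t)$ reproduces the frame change, and the uniqueness part of the unitary equivalence of minimal dilations forces $U_g(t)$ to have this tensor form. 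So the key object is $W_g(t) := U_I(t)\,\pi_S(g)\,U_I^\dagger(t)$, which we know equals $\pi_S(g)\otimes\pi_E(g,t)$ whenever the dilation is minimal.

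Next I would compute $W_g(t)$ directly using $U_I(t) = \exp(-iH_I t)$. Conjugation by a one-parameter group is smooth in $t$, so $W_g(t)$ is differentiable, and one can either differentiate or simply observe the group-like composition $W_g(t+s) = U_I(t)\big(U_I(s)\pi_S(g)U_I^\dagger(s)\big)U_I^\dagger(t)$. The crucial point is to combine this with the tensor-product factorization valid on a co-null set of times. On that set, $\pi_E(g,t)$ is determined (up to a phase that can be fixed by normalizing, e.g.\ requiring a fixed matrix element, or by using that both sides are genuine representations) by $\pi_E(g,t) = (\mathbb{I}_S\otimes\text{anything})$-independent data extracted from $W_g(t)$; concretely, tracing out or projecting appropriately, $\pi_E(g,t)$ is a continuous (indeed real-analytic) function of $t$ on the co-null set. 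Then I would show that the map $t \mapsto \pi_E(g,t)$ satisfies a relation incompatible with genuine $t$-dependence: differentiating $W_g(t) = \pi_S(g)\otimes\pi_E(g,t)$ gives $-i[H_I, W_g(t)] = \pi_S(g)\otimes \dot\pi_E(g,t)$, i.e.\ $[H_I,\,\pi_S(g)\otimes\pi_E(g,t)]$ must be of the form $\pi_S(g)\otimes(\text{something})$ for all $t$; iterating (taking further commutators with $H_I$) and using that $\pi_S(g)$ is a fixed invertible operator, one concludes that the only consistent possibility compatible with $W_g$ remaining of tensor-product form for all $t$ in an interval is $\dot\pi_E(g,t)=0$. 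Equivalently and more cleanly: evaluate the factorization at $t$ and at $2t$ (or at $t_1$ and $t_2$ with $t_1+t_2$ also in the good set) and use $U_I(t_1+t_2)=U_I(t_1)U_I(t_2)$ to get $\pi_E(g,t_1+t_2) = \pi_E(g,t_1)\pi_E(g,t_2)$ modulo the $\pi_S$ factor cancelling — wait, that would only hold if $[U_I(t_1),\pi_S(g)\otimes\mathbb{I}_E]$ had the right form; the honest route is the commutator/analyticity argument above, extended from the co-null set to all $t$ by continuity.

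The main obstacle I anticipate is the bookkeeping around minimality holding only almost everywhere: the factorization $W_g(t)=\pi_S(g)\otimes\pi_E(g,t)$ is a priori guaranteed only off a zero-measure exceptional set, whereas I want to differentiate in $t$ and to conclude for all $t$. I would handle this by first establishing that $W_g(t)$ itself (defined via conjugation by $\exp(-iH_It)$) is real-analytic in $t$ on all of $\R$; the set of $t$ where $W_g(t)$ lies in the (closed, algebraic) subvariety of operators of the form $\pi_S(g)\otimes X$ is therefore either all of $\R$ or a discrete set, and since by hypothesis it has full measure it must be all of $\R$. Having upgraded the factorization to hold everywhere and smoothly, the derivative argument goes through: $-i[H_I,\pi_S(g)\otimes\pi_E(g,t)] = \pi_S(g)\otimes\dot\pi_E(g,t)$, and since the left side must simultaneously be a valid ``derivative-type'' perturbation keeping $W_g$ in the variety for all $t$, a short argument (expanding $[H_I, \pi_S(g)\otimes\pi_E(g,t)]$ in a product basis and matching the $\pi_S(g)\otimes(\cdot)$ structure) forces the $S$-component of the relevant part of $H_I$ to commute with $\pi_S(g)$, which in turn yields $\dot\pi_E(g,t)=0$. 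Finally I would note this holds for every $g\in G$, giving Eq.~\eqref{eqn:statCov}. The phase ambiguity in extracting $\pi_E$ from the tensor factorization is a genuinely minor point but should be mentioned: it is fixed once and for all by continuity from $t=0$, where $\pi_E(g,0)=\pi_E(g)$ can be taken as the reference representation.
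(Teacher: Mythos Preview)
Your proposal contains a genuine error at the very first step. You claim that minimality of the dilation at time $t$ yields the operator identity
\[
W_g(t):=U_I(t)\bigl(\pi_S(g)\otimes\mathbb{I}_E\bigr)U_I^\dagger(t)=\pi_S(g)\otimes\pi_E(g,t)
\]
on all of $\mathcal{H}_{SE}$. This is false. The covariance relation $V(t)\pi_S(g)=(\pi_S(g)\otimes\pi_E(g,t))V(t)$ says only that $(\pi_S(g)\otimes\pi_E(g,t))U_I(t)=U_I(t)(\pi_S(g)\otimes\mathbb{I}_E)$ \emph{on vectors of the form $\ket{\varphi_S}\otimes\ket{\psi_E}$}. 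Minimality lets you define $U_g(t)=\pi_S(g)\otimes\pi_E(g,t)$ uniquely as an operator on $\mathcal{H}_{SE}$, but it does \emph{not} upgrade the intertwining with $U_I(t)$ to a full-space identity. A quick check in Example~1 of the paper ($H_I=\sigma_S^+\sigma_E^-+\text{h.c.}$, $\pi_S(g)=e^{-igN_S}$) shows that $U_I(t)(\pi_S(g)\otimes\mathbb{I}_E)U_I^\dagger(t)$ has a non-zero $\ket{1_S0_E}\!\bra{0_S1_E}$ matrix element and is \emph{not} of tensor-product form for generic $t,g$. Everything downstream---the analyticity argument that $W_g(t)$ lies in the tensor-product variety for all $t$, and the commutator identity $-i[H_I,W_g(t)]=\pi_S(g)\otimes\dot\pi_E(g,t)$ on $\mathcal{H}_{SE}$---inherits this error.

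The paper's proof also differentiates Eq.~\eqref{eqn:physDilationCovStarting} in $t$, but it correctly keeps the resulting von Neumann-type relation
\[
\restr{\pi_S(g)\otimes\tfrac{d\pi_E(g,t)}{dt}}{\Sigma_\parallel}=-i\restr{[H_I,\pi_S(g)\otimes\pi_E(g,t)]}{\Sigma_\parallel}
\]
restricted to the subspace $\Sigma_\parallel$ of Eq.~\eqref{eqn:definitionParalSubspace}. The closing step is also different from your sketch: the solution of this equation is the Heisenberg orbit $U_I(t)\bigl(\pi_S(g)\otimes\pi_E(g,0)\bigr)U_I^\dagger(t)$, and requiring that this orbit stay of the form $\pi_S(g)\otimes(\cdot)$ with the \emph{fixed} first factor $\pi_S(g)$ would force $U_I(t)$ not to entangle system and environment on $\Sigma_\parallel$, contradicting the assumption of a non-trivial minimal dilation. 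Hence $\dot\pi_E(g,t)=0$. Your alternative idea---expanding $[H_I,\pi_S(g)\otimes\pi_E(g,t)]$ in a product basis and ``matching the $\pi_S(g)\otimes(\cdot)$ structure''---is not yet an argument, and in any case would have to start from the correctly restricted identity rather than the full-space one.
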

\begin{proof}
    In Appendix~\ref{sec:proofStatCovProp}.
\end{proof}
This is a quite important result, as it guarantees that the covariant property at the level of the dilated Hilbert space is constant, so that we can study it independently of time. Eq.~\eqref{eqn:physDilationCovStarting} becomes:
\begin{equation}
    \label{eqn:covPropertyTimeIndHam}
        U_ge^{-i H_I t}\ket{\varphi_S}\otimes\ket{\psi_E}=e^{-i H_I t}\pi_S(g) \ket{\varphi_S}\otimes\ket{\psi_E},
\end{equation}
where we have defined 
\begin{equation}
\label{eqn:repCovTotalHilbert}
    U_g=\pi_S(g)\otimes\pi_E(g).
\end{equation}
Eq.~\eqref{eqn:covPropertyTimeIndHam} has an immediate crucial consequence:
\begin{prop}[Invariant state of the environment] The state of the environment $\ket{\psi_E}$ is invariant under the group action, i.e., it is an eigenstate of $\pi_E(g)$ with eigenvalue 1:
\begin{equation}
\label{eqn:InvStateEnv}
    \pi_E(g)\ket{\psi_E}=\ket{\psi_E} \quad \forall g\in G.
\end{equation}
\end{prop}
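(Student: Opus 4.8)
The plan is to exploit Eq.~\eqref{eqn:covPropertyTimeIndHam} at $t=0$, where the time evolution operator is trivial. Setting $t=0$ in Eq.~\eqref{eqn:covPropertyTimeIndHam} gives
\begin{equation}
\label{eqn:planZeroTime}
    \pi_S(g)\otimes\pi_E(g)\,\ket{\varphi_S}\otimes\ket{\psi_E}=\pi_S(g)\ket{\varphi_S}\otimes\ket{\psi_E}
\end{equation}
for all $g\in G$ and all $\ket{\varphi_S}\in\mathcal{H}_S$. First I would act on both sides with $\pi_S^\dagger(g)\otimes\mathbb{I}_E$, which is legitimate since $\pi_S(g)$ is unitary and the factor $\pi_S(g)$ on the system side is invertible. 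The left-hand side becomes $\ket{\varphi_S}\otimes\pi_E(g)\ket{\psi_E}$ and the right-hand side becomes $\ket{\varphi_S}\otimes\ket{\psi_E}$, so that $\ket{\varphi_S}\otimes\pi_E(g)\ket{\psi_E}=\ket{\varphi_S}\otimes\ket{\psi_E}$. Since this holds for at least one nonzero $\ket{\varphi_S}$ (indeed for all of them), tensoring on the left by a fixed nonzero system vector is injective, and we can cancel it to conclude $\pi_E(g)\ket{\psi_E}=\ket{\psi_E}$ for all $g\in G$, which is exactly Eq.~\eqref{eqn:InvStateEnv}.

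A subtlety worth addressing explicitly is whether Eq.~\eqref{eqn:covPropertyTimeIndHam} is actually valid at $t=0$, given that the footnote on minimal dilations allows the dilation to fail minimality on a zero-measure set that may include $t^*=0$. I would handle this by noting that the stationary covariance property (\textbf{Proposition 1}) already establishes that $\pi_E(g,t)=\pi_E(g)$ is a fixed representation, derived on the set of $t$ where the dilation is minimal. One then takes the limit $t\to 0^+$ in Eq.~\eqref{eqn:covPropertyTimeIndHam}: the left-hand side tends to $U_g\ket{\varphi_S}\otimes\ket{\psi_E}$ and the right-hand side to $\pi_S(g)\ket{\varphi_S}\otimes\ket{\psi_E}$ by continuity of $t\mapsto e^{-iH_It}$, so the $t=0$ identity in Eq.~\eqref{eqn:planZeroTime} holds by continuity even if $t=0$ itself is a non-minimal instant. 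Alternatively, and more simply, one may just evaluate at any small $t>0$ in the minimal regime, multiply Eq.~\eqref{eqn:covPropertyTimeIndHam} on the left by $e^{iH_It}$ if that commuted appropriately — but since $[H_I, \pi_S(g)\otimes\pi_E(g)]$ is not yet known to vanish, the cleanest route is genuinely the continuity argument at $t\to 0^+$.

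The main obstacle, such as it is, is really just the careful bookkeeping around the $t=0$ (non-)minimality caveat; the algebraic content is a one-line cancellation. I would therefore present the proof in the order: (i) invoke \textbf{Proposition 1} to fix $\pi_E(g)$; (ii) take $t\to 0^+$ in Eq.~\eqref{eqn:covPropertyTimeIndHam} to obtain Eq.~\eqref{eqn:planZeroTime}; (iii) apply $\pi_S^\dagger(g)\otimes\mathbb{I}_E$ and cancel the system factor to get Eq.~\eqref{eqn:InvStateEnv}. A final remark I would add is that this immediately implies $\ket{\psi_E}$ lies in the trivial isotypic component of $\pi_E$, which is the natural hypothesis of the Marvian et al. existence results cited in Sec.~\ref{sec:CovStinespring}; here, by contrast, invariance of the environment state is forced rather than assumed, for this class of dilations.
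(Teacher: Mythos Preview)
Your proof is correct and follows exactly the paper's own approach: the paper's proof consists of the single line ``the result is obtained by taking the limit $t\to 0^+$ in Eq.~\eqref{eqn:covPropertyTimeIndHam},'' which is precisely your step (ii), with your steps (i) and (iii) making explicit the bookkeeping that the paper leaves implicit. Your careful handling of the $t=0$ minimality caveat via continuity is a welcome elaboration of exactly what the paper's limit argument is doing.
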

\begin{proof}
    The result is obtained by taking the limit $t\rightarrow 0^+$ in Eq.~\eqref{eqn:covPropertyTimeIndHam}.
\end{proof}
\textbf{Proposition 2} imposes significant constraints on the structure of the possible representations $\pi_E$, and, consequently, on the feasibility of constructing appropriate physical dilations of the map $\phi(t)$ based on time-independent Hamiltonians. Specifically, Eq.~\eqref{eqn:InvStateEnv} requires that the projector $\ket{\psi_E} \!\bra{\psi_E}$ spans a one-dimensional subspace where $G$ operates via the trivial representation \cite{cornwell}. As a consequence, the representation $\pi_E$ must exhibit the corresponding block structure on $\ket{\psi_E}$.

A key result follows:
\begin{prop}[Strong symmetry in the subspace spanned by the dynamics] 
\begin{equation}
\label{eqn:SymmHamTimeIndRes}
   \restr{U_g^\dagger H_I U_g}{\Sigma_\parallel}=\restr{H_I}{\Sigma_\parallel}\quad \forall g\in G,
\end{equation}
where $\Sigma_\parallel$ is a subspace of $\mathcal{H}_{SE}$ defined by:
\begin{equation}
    \label{eqn:definitionParalSubspace}
\Sigma_\parallel=Span\{U_I(t)\ket{\varphi_S}\otimes\ket{\psi_E},\;\forall\ket{\varphi_S}\in\mathcal{H}_S,\;\forall t\}.
\end{equation}
\end{prop}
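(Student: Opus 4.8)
The plan is to exploit the covariance identity \eqref{eqn:covPropertyTimeIndHam} together with the stationarity of $\pi_E$ (Proposition 1) and the invariance of $\ket{\psi_E}$ (Proposition 2). First I would rewrite \eqref{eqn:covPropertyTimeIndHam} in operator form: since it holds for all $\ket{\varphi_S}\in\mathcal{H}_S$, and using $\pi_S(g)\ket{\varphi_S}\otimes\ket{\psi_E} = \pi_S(g)\otimes\pi_E(g)\,\ket{\varphi_S}\otimes\ket{\psi_E} = U_g\,\ket{\varphi_S}\otimes\ket{\psi_E}$ by Proposition 2, we get $U_g\,e^{-iH_I t}\,\ket{\varphi_S}\otimes\ket{\psi_E} = e^{-iH_I t}\,U_g\,\ket{\varphi_S}\otimes\ket{\psi_E}$. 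Thus $U_g$ and $e^{-iH_I t}$ commute \emph{when acting on the subspace} $\mathcal{H}_S\otimes\ket{\psi_E}$, i.e. $[U_g, e^{-iH_I t}]\,\ket{\chi} = 0$ for every $\ket{\chi}\in \mathcal{H}_S\otimes\ket{\psi_E}$ and every $t$.

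Next I would propagate this commutation to the full subspace $\Sigma_\parallel$. Take an arbitrary vector $\ket{\xi}\in\Sigma_\parallel$; by \eqref{eqn:definitionParalSubspace} it is a (limit of) linear combination(s) of vectors of the form $e^{-iH_I t_k}\ket{\varphi_{S,k}}\otimes\ket{\psi_E}$. For each such generator, $U_g\,e^{-iH_I t_k}\ket{\varphi_{S,k}}\otimes\ket{\psi_E} = e^{-iH_I t_k}\,U_g\,\ket{\varphi_{S,k}}\otimes\ket{\psi_E}$, and since $U_g\ket{\varphi_{S,k}}\otimes\ket{\psi_E} = \pi_S(g)\ket{\varphi_{S,k}}\otimes\ket{\psi_E}$ again lies in $\mathcal{H}_S\otimes\ket{\psi_E}$, the right-hand side is again a generator of $\Sigma_\parallel$. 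Hence $U_g$ maps the generating set of $\Sigma_\parallel$ into $\Sigma_\parallel$, so $U_g\Sigma_\parallel\subseteq\Sigma_\parallel$, and since $U_g$ is unitary (and $G$ a group, so $U_{g^{-1}}$ also preserves $\Sigma_\parallel$), in fact $U_g\Sigma_\parallel = \Sigma_\parallel$: $\Sigma_\parallel$ is an invariant subspace of $U_g$. I would then check that $e^{-iH_I t}\Sigma_\parallel = \Sigma_\parallel$ as well (immediate from the definition, using the group property of $t\mapsto e^{-iH_I t}$), so both $U_g$ and the Hamiltonian flow restrict to well-defined operators on $\Sigma_\parallel$.

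Now the commutation $[U_g, e^{-iH_I t}] = 0$ on $\Sigma_\parallel$ follows: for a generator $e^{-iH_I s}\ket{\varphi_S}\otimes\ket{\psi_E}$ of $\Sigma_\parallel$ we have $U_g e^{-iH_I t} e^{-iH_I s}\ket{\varphi_S}\otimes\ket{\psi_E} = U_g e^{-iH_I(t+s)}\ket{\varphi_S}\otimes\ket{\psi_E} = e^{-iH_I(t+s)}U_g\ket{\varphi_S}\otimes\ket{\psi_E} = e^{-iH_I t}U_g e^{-iH_I s}\ket{\varphi_S}\otimes\ket{\psi_E}$, using \eqref{eqn:covPropertyTimeIndHam} in the middle step and linearity/continuity to extend to all of $\Sigma_\parallel$. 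So $\restr{U_g^\dagger e^{-iH_I t}U_g}{\Sigma_\parallel} = \restr{e^{-iH_I t}}{\Sigma_\parallel}$ for all $t$. Differentiating at $t=0$ (which is legitimate because $\Sigma_\parallel$ is finite-dimensional and invariant, so the restricted one-parameter groups are genuine matrix exponentials) yields $\restr{U_g^\dagger H_I U_g}{\Sigma_\parallel} = \restr{H_I}{\Sigma_\parallel}$, which is \eqref{eqn:SymmHamTimeIndRes}.

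The main obstacle I anticipate is handling the \emph{closure/limit} subtleties in the definition of $\Sigma_\parallel$ carefully — $\mathrm{Span}$ of the infinite family over all $t$ is genuinely a subspace only after taking closure, though in finite dimensions the span is automatically closed, so this is mild. A slightly more delicate point is justifying that the differentiation step commutes with the restriction to $\Sigma_\parallel$: one must argue that $H_I$ need not preserve $\Sigma_\parallel$ as an operator on all of $\mathcal{H}_{SE}$, but that $\restr{H_I}{\Sigma_\parallel}$ (meaning $P_\parallel H_I P_\parallel$ with $P_\parallel$ the projector onto $\Sigma_\parallel$, equivalently the generator of the restricted flow) \emph{is} well-defined precisely because $e^{-iH_I t}\Sigma_\parallel = \Sigma_\parallel$; this is exactly the content that makes $\Sigma_\parallel$ the natural ``dynamically generated'' subspace (and connects to the Krylov-subspace characterization mentioned in the abstract). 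Everything else is a routine manipulation of the covariance relation and the group laws.
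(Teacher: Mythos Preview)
Your proof is correct and essentially follows the paper's route: the paper differentiates the covariance relation \eqref{eqn:covPropertyTimeIndHam} in $t$, obtaining (after Proposition~1) $\restr{[H_I,U_g]}{\Sigma_\parallel}=0$ directly on vectors of the form $U_I(t)\ket{\varphi_S}\otimes\ket{\psi_E}$, whereas you first use the semigroup law $e^{-iH_I t}e^{-iH_I s}=e^{-iH_I(t+s)}$ to extend the commutation of $U_g$ with $e^{-iH_I t}$ from $\mathcal{H}_S\otimes\ket{\psi_E}$ to all of $\Sigma_\parallel$ and then differentiate at $t=0$. The two arguments are reorderings of the same computation, and your explicit verification that $\Sigma_\parallel$ is invariant under both $U_g$ and $e^{-iH_I t}$ (hence under $H_I$) is exactly what the paper establishes separately via the Krylov characterization and Proposition~5.
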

\begin{proof}
   In Appendix~\ref{sec:proofStatCovProp}.
\end{proof}
For time-independent Hamiltonians, $\Sigma_\parallel$ has a very nice characterization in terms of \textit{Krylov subspaces}, whose relevance for quantum information theory is growing fast \cite{Nandy2024,Parker2019}.
\begin{definition}[Krylov subspace] The order$-r$ Krylov subspace generated by the square matrix $A$ and the vector $\ket{v}$ is:
\begin{equation}
\label{eqn:krylovDef}
    \mathcal{K}_r(A,\ket{v})=Span\{\ket{v},A\ket{v},A^2\ket{v},\ldots,A^{r-1}\ket{v}\}.
\end{equation}
\end{definition}

Then, Eq.~\eqref{eqn:definitionParalSubspace} for $U_I(t)=\exp(-i H_I t)$ can be written as the space $\mathcal{K}_\parallel$, defined as:
\begin{equation}
\label{eqn:defKrylovParallel}
   \Sigma_\parallel= \mathcal{K}_\parallel:=\sum_{j=1}^n{\mathcal{K}_{r_{0,j}}\left(H_I,\ket{\varphi_{S,j}}\otimes\ket{\psi_E}\right)}.
\end{equation}
 $\{\ket{\varphi_{S,j}}\}_{j=1}^n$ is any basis of $\mathcal{H_S}$, while $\mathcal{K}_{r_{0,j}}$ is the Krylov subspace of maximal dimension for the state $\ket{\varphi_{S,j}}$. This subspace is characterized by the largest possible order $r_{0,j}$ such that Eq.~\eqref{eqn:krylovDef} is a span of linearly independent vectors\footnote{$r_{0,j}$ is also referred to as the \textit{Krylov dimension}}. Note that the summation in Eq.~\eqref{eqn:defKrylovParallel} denotes the sum (and not the direct sum) of vector spaces. Further details about this characterization and the Krylov subspaces can be found in Appendix~\ref{sec:krylov}.

We now consider the physical implications of \textbf{Proposition 3}. If $\phi(t)$ is covariant with respect to a \textit{weak symmetry}, as defined in  \textbf{Definition 5}, then \textit{any} physical dilation of $\phi(t)$ that is constructed as $(\mathcal{H}_{SE}, \exp(-i H_I t), \ket{\psi_E})$ must adhere to certain constraints. These constraints can be interpreted as the presence of a \textit{strong} symmetry\footnote{Indeed, the time-evolution Hamiltonian is symmetric according to Eq.~\eqref{eqn:SymmHamTimeIndRes}, leading to analogous consequences as symmetries in closed quantum systems.} within the system-environment dynamics, which holds only on a \textit{restricted subspace of} $\mathcal{H}_{SE}$. This subspace, intuitively, is spanned by the evolution at any possible time $t$ from any initial state of the system, while starting from the fixed initial state of the environment $\ket{\psi_E}$. In other words, it is the subspace explored by the system-environment dynamics. 

The consequences of this result are striking if we consider symmetries characterized by a continuous one-parameter group, as expressed in Eq.~\eqref{eqn:phaseCov}.
\begin{prop}[Conserved quantities in the subspace spanned by the dynamics] Suppose the quantum map $\phi(t)$ is symmetric with respect to $\pi_S(g)=\exp(-i g J_S)$ defined in Eq.~\eqref{eqn:phaseCov}. Then, the representation of the group on the space of the environment in Eq.~\eqref{eqn:repCovTotalHilbert} is characterized by $\pi_E(g)=\exp(-i g J_E)$, and 
\begin{equation}
\label{eqn:CommHamTimeInd}
   \restr{[H_I,J_S\otimes\mathbb{I}_E+\mathbb{I}_S\otimes J_E]}{\mathcal{K}_\parallel}=0,
\end{equation}
where $\mathcal{K}_\parallel$ is given by Eq.~\eqref{eqn:defKrylovParallel}.
\end{prop}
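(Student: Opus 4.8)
The plan is to deduce Proposition~4 as the one-parameter-group specialization of Proposition~3, so the bulk of the work is already done. First I would invoke the covariant dilation theorem (Theorem~3) together with Proposition~1 to secure that the environment carries a \emph{time-independent} unitary representation $\pi_E$ of $G$. Since $G$ is here either $(\R,+)$ or $U(1)$ and $\pi_E$ is finite-dimensional, Stone's theorem (or simply diagonalization of a finite-dimensional unitary one-parameter group) guarantees a Hermitian generator $J_E$ with $\pi_E(g)=\exp(-ig J_E)$; in the $U(1)$ case one additionally notes that $2\pi$-periodicity forces the spectrum of $J_E$ to be integer-valued, which is automatic and needs no extra argument. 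Combined with the hypothesis $\pi_S(g)=\exp(-ig J_S)$ from Eq.~\eqref{eqn:phaseCov}, Eq.~\eqref{eqn:repCovTotalHilbert} then reads $U_g = \exp\bigl(-ig(J_S\otimes\mathbb{I}_E + \mathbb{I}_S\otimes J_E)\bigr)$, so the total generator of $U_g$ is exactly $K := J_S\otimes\mathbb{I}_E + \mathbb{I}_S\otimes J_E$.

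Next I would translate Eq.~\eqref{eqn:SymmHamTimeIndRes} of Proposition~3, namely $\restr{U_g^\dagger H_I U_g}{\Sigma_\parallel}=\restr{H_I}{\Sigma_\parallel}$ for all $g$, into the infinitesimal statement. The key point is that $\Sigma_\parallel$ is an invariant subspace for both $H_I$ and $U_g$: it is $H_I$-invariant because it is a sum of Krylov subspaces of $H_I$ (Eq.~\eqref{eqn:defKrylovParallel}), and it is $U_g$-invariant because $\restr{U_g^\dagger H_I U_g}{\Sigma_\parallel}=\restr{H_I}{\Sigma_\parallel}$ together with $U_g\ket{\psi_E}$-type relations from Eq.~\eqref{eqn:covPropertyTimeIndHam} (more directly, $U_g$ commutes with $H_I$ on $\Sigma_\parallel$ and fixes $\Sigma_\parallel$ by construction, since $U_g e^{-iH_It}\ket{\varphi_S}\otimes\ket{\psi_E} = e^{-iH_It}\pi_S(g)\ket{\varphi_S}\otimes\ket{\psi_E}\in\Sigma_\parallel$). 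Restricting everything to $\Sigma_\parallel$, the equation $\restr{U_g^\dagger H_I U_g}{\Sigma_\parallel}=\restr{H_I}{\Sigma_\parallel}$ says that the restricted operator $\restr{H_I}{\Sigma_\parallel}$ commutes with the restricted one-parameter group $\restr{U_g}{\Sigma_\parallel}=\restr{\exp(-igK)}{\Sigma_\parallel}$ for every $g$. Differentiating at $g=0$ (legitimate since $\Sigma_\parallel$ is finite-dimensional and all maps involved are analytic in $g$) yields $\restr{[H_I,K]}{\Sigma_\parallel}=0$, which is precisely Eq.~\eqref{eqn:CommHamTimeInd} once one recalls $\Sigma_\parallel=\mathcal{K}_\parallel$ from Eq.~\eqref{eqn:defKrylovParallel}.

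The only mild subtlety — and the step I would treat most carefully — is making sure the differentiation in $g$ genuinely commutes with the restriction to $\Sigma_\parallel$; concretely, one wants $\frac{d}{dg}\bigl(\restr{U_g^\dagger H_I U_g}{\Sigma_\parallel}\bigr)\big|_{g=0} = \restr{(-i[H_I,K])}{\Sigma_\parallel}$ rather than picking up spurious boundary terms from vectors leaving $\Sigma_\parallel$. This is where the joint invariance of $\Sigma_\parallel$ under $U_g$ and $H_I$ is essential: because $U_g\Sigma_\parallel=\Sigma_\parallel$ and $H_I\Sigma_\parallel\subseteq\Sigma_\parallel$, the map $g\mapsto \restr{U_g^\dagger H_I U_g}{\Sigma_\parallel}$ is a genuine curve in the (finite-dimensional) space of operators on $\Sigma_\parallel$, so termwise differentiation is valid and produces $\restr{[H_I,K]}{\Sigma_\parallel}=0$. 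I would state this invariance explicitly as a short lemma, citing Eq.~\eqref{eqn:covPropertyTimeIndHam} for $U_g$-invariance and Eq.~\eqref{eqn:defKrylovParallel} for $H_I$-invariance, and then the differentiation is a one-line computation. No genuinely hard estimate is needed; the content of Proposition~4 is really the bookkeeping that converts the group-level symmetry of Proposition~3 into a Lie-algebra-level conservation law on the Krylov subspace.
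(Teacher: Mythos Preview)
Your proposal is correct and follows the same route the paper takes: Proposition~4 is presented there as the immediate Lie-algebra/infinitesimal version of Proposition~3, obtained by differentiating $\restr{U_g^\dagger H_I U_g}{\mathcal{K}_\parallel}=\restr{H_I}{\mathcal{K}_\parallel}$ at $g=0$ once one knows $U_g=\exp(-ig(J_S\otimes\mathbb{I}_E+\mathbb{I}_S\otimes J_E))$. If anything you are more careful than the paper, which treats the result as self-evident and supplies no separate proof; your explicit invocation of the $H_I$- and $U_g$-invariance of $\mathcal{K}_\parallel$ (established in the paper as Proposition~5 and in Appendix~\ref{sec:geometry}) is exactly the ingredient that makes the differentiation-under-restriction step rigorous.
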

\textbf{Proposition 4} can be seen as an expression of Noether's theorem on the subspace of $\mathcal{H}_{SE}$ spanned by the dynamics, and it can be trivially extended to the generators of any continuous Lie group $G$ living in the corresponding Lie algebra, going beyond Abelian groups \cite{cornwell}. \textbf{Example 1} in Sec.~\ref{sec:examples} shows a simple scenario in which a strong symmetry emerges in the physical dilation of a phase-covariant map based on a time-independent Hamiltonian. 

The Hamiltonian $H_I$ and the representation of the symmetry group $U_g$ have a nice geometrical characterization in terms of the subspace $\mathcal{K}_\parallel$ in Eq.~\eqref{eqn:defKrylovParallel}:
\begin{prop}[Block-diagonal structure of the Hamiltonian and symmetry representation] 
If $P_\parallel$ is the projector on the subspace $\mathcal{K}_\parallel$ and, correspondingly, $P_\perp=\mathbb{I}-P_\parallel$ is the projector on the orthogonal complement $\mathcal{K}_\perp$, then the Hamiltonian and the representation $U_g$ on the full Hilbert space $\mathcal{H}_{SE}$ are block-diagonal in these subspaces:
\begin{equation}
    \label{eqn:blockDiagHiUg}
     \begin{split}
    H_I&=P_\parallel H_I P_\parallel\oplus P_\perp H_I P_\perp,\\
    U_g&=P_\parallel U_g P_\parallel\oplus P_\perp U_g P_\perp,\quad \forall g\in G.\\
 \end{split}
\end{equation}
\end{prop}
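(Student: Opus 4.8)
## Proof proposal for Proposition 5

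The plan is to prove both block-diagonality statements by showing that the subspace $\mathcal{K}_\parallel$ is invariant under both $H_I$ and $U_g$; once a Hermitian operator (resp.\ a unitary) leaves a subspace invariant, it automatically leaves the orthogonal complement invariant as well, and the block-diagonal decomposition follows immediately. So the whole argument reduces to two invariance claims: $H_I \mathcal{K}_\parallel \subseteq \mathcal{K}_\parallel$ and $U_g \mathcal{K}_\parallel \subseteq \mathcal{K}_\parallel$ for all $g \in G$.

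For the Hamiltonian, I would argue directly from the Krylov-subspace characterization in Eq.~\eqref{eqn:defKrylovParallel}. Each summand $\mathcal{K}_{r_{0,j}}(H_I, \ket{\varphi_{S,j}}\otimes\ket{\psi_E})$ is, by the defining maximality of the order $r_{0,j}$ (the Krylov dimension), invariant under $H_I$: applying $H_I$ to any spanning vector $H_I^k \ket{\varphi_{S,j}}\otimes\ket{\psi_E}$ with $k \le r_{0,j}-1$ gives $H_I^{k+1}\ket{\varphi_{S,j}}\otimes\ket{\psi_E}$, which for $k < r_{0,j}-1$ is manifestly in the span, and for $k = r_{0,j}-1$ lies in the span because $H_I^{r_{0,j}}\ket{\varphi_{S,j}}\otimes\ket{\psi_E}$ is by definition linearly dependent on the lower powers. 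A finite sum of $H_I$-invariant subspaces is $H_I$-invariant, so $H_I \mathcal{K}_\parallel \subseteq \mathcal{K}_\parallel$. Since $H_I$ is Hermitian, $P_\perp H_I P_\parallel = (P_\parallel H_I P_\perp)^\dagger = 0$, which gives the first line of Eq.~\eqref{eqn:blockDiagHiUg}.

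For the symmetry representation, the cleanest route uses Proposition~3 together with the invariant-state property of Proposition~2. By construction $\mathcal{K}_\parallel = \Sigma_\parallel = \mathrm{Span}\{U_I(t)\ket{\varphi_S}\otimes\ket{\psi_E}\}$. Acting with $U_g = \pi_S(g)\otimes\pi_E(g)$ on a spanning vector and using the covariance relation Eq.~\eqref{eqn:covPropertyTimeIndHam} rewritten as $U_g e^{-iH_I t}\ket{\varphi_S}\otimes\ket{\psi_E} = e^{-iH_I t}\pi_S(g)\ket{\varphi_S}\otimes\ket{\psi_E}$, we see that $U_g$ maps $U_I(t)\ket{\varphi_S}\otimes\ket{\psi_E}$ to $U_I(t)\,(\pi_S(g)\ket{\varphi_S})\otimes\ket{\psi_E}$, which is again of the spanning form since $\pi_S(g)\ket{\varphi_S}\in\mathcal{H}_S$. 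Hence $U_g \mathcal{K}_\parallel \subseteq \mathcal{K}_\parallel$; because $U_g$ is unitary it also preserves $\mathcal{K}_\perp$, giving $P_\perp U_g P_\parallel = 0$ and $P_\parallel U_g P_\perp = 0$, which is the second line of Eq.~\eqref{eqn:blockDiagHiUg}.

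The main obstacle — really the only subtle point — is making the $H_I$-invariance of $\mathcal{K}_\parallel$ fully rigorous as a statement about the \emph{sum} (not direct sum) of Krylov subspaces, and confirming that this sum genuinely equals $\Sigma_\parallel$ as defined through the full time evolution $e^{-iH_I t}$; this is precisely the identification that is deferred to Appendix~\ref{sec:krylov}, and I would invoke it here. One should also note the minor caveat from the earlier footnote that minimality (hence the characterization of $\Sigma_\parallel$) may fail on a zero-measure set of times, but since $\Sigma_\parallel$ is defined as a span over \emph{all} $t$ and the Krylov characterization is purely algebraic, this does not affect the argument. Everything else is the standard linear-algebra fact that a self-adjoint or unitary operator leaving a subspace invariant is block-diagonal with respect to the orthogonal splitting.
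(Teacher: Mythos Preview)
Your proposal is correct and follows essentially the same approach as the paper's proof in Appendix~\ref{sec:geometry}: both establish block-diagonality by showing that $\mathcal{K}_\parallel$ (or equivalently $\mathcal{K}_\perp$) is invariant under $H_I$ and $U_g$, then invoke Hermiticity and unitarity respectively. The only cosmetic differences are that the paper proves $H_I$-invariance of $\mathcal{K}_\perp$ directly (via $\bra{b}H_I^r\ket{\varphi_S}\otimes\ket{\psi_E}=0$) rather than $\mathcal{K}_\parallel$, and for $U_g$ it works with the Krylov spanning vectors $H_I^r\ket{\varphi_S}\otimes\ket{\psi_E}$ using Propositions~2 and~3, whereas you use the time-evolved vectors $U_I(t)\ket{\varphi_S}\otimes\ket{\psi_E}$ and Eq.~\eqref{eqn:covPropertyTimeIndHam} directly---these are equivalent via the identification $\Sigma_\parallel=\mathcal{K}_\parallel$ that you correctly flag.
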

\begin{proof}
    In Appendix~\ref{sec:geometry}.
\end{proof}

The above proposition is showing that there is a freedom in the choice of the interaction Hamiltonian $H_I$ for the creation of a suitable physical dilation $(\mathcal{H}_{SE},\exp(-i H_I t),\ket{\psi_E})$. We can define a physical dilation of the same map $\phi(t)$ as $(\mathcal{H}_{SE},\exp(-i H_I' t),\ket{\psi_E})$. The new $H_I'$ must be identical to $H_I$ on $\mathcal{K}_\parallel$, while it can behave arbitrarily on $\mathcal{K}_\perp$. Then, we obtain the following natural result.
\begin{prop}[Construction of a symmetric Hamiltonian on the full system+environment space]
Take a physical dilation $(\mathcal{H}_{SE},\exp(-i H_I t),\ket{\psi_E})$ of a covariant map $\phi(t)$ that satisfies \textbf{Proposition 3}. Then, we can always find a Hamiltonian operator $H'_I$ which defines a physical dilation of $\phi(t)$ through $(\mathcal{H}_{SE},\exp(-i H_I' t),\ket{\psi_E})$ and is invariant with respect to the action of the global symmetry $U_g$ on the total Hilbert space $\mathcal{H}_{SE}$, i.e., $U_g H'_I U_g^\dagger=H'_I$ without restrictions to any subspace. Moreover, if both $H_I$ and $H_I'$ lead to suitable physical dilations of $\phi(t)$, then:
\begin{equation}
    \label{eqn:equalHamParallel}
    \restr{H_I}{\mathcal{K}_\parallel}=\restr{H'_I}{\mathcal{K}_\parallel}.
\end{equation}
\end{prop}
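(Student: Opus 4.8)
The plan is to leverage the block-diagonal structure established in \textbf{Proposition 5}. By that proposition, the given Hamiltonian $H_I$ already splits as $H_I = P_\parallel H_I P_\parallel \oplus P_\perp H_I P_\perp$, and the representation splits as $U_g = P_\parallel U_g P_\parallel \oplus P_\perp U_g P_\perp$. Moreover, by \textbf{Proposition 3} (in the form of Eq.~\eqref{eqn:SymmHamTimeIndRes}), the restriction $\restr{H_I}{\mathcal{K}_\parallel}$ commutes with $\restr{U_g}{\mathcal{K}_\parallel}$ for all $g\in G$. The strategy is therefore to \emph{keep the parallel block untouched} and \emph{replace the perpendicular block by something manifestly symmetric}.

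The construction I would give is explicit: set
\begin{equation}
\label{eqn:newHamConstruction}
    H'_I = P_\parallel H_I P_\parallel \oplus \int_G d\mu(g)\, U_g\, (P_\perp H_I P_\perp)\, U_g^\dagger,
\end{equation}
where $d\mu$ is the (normalized) Haar measure on the compact group $G$ — or, for a one-parameter group such as $(\R,+)$ or $U(1)$, the obvious average/projection onto the commutant of the $\pi_E$-block restricted to $\mathcal{K}_\perp$; for finite $G$ one replaces the integral by a normalized sum. The second summand is by construction invariant under conjugation by $U_g$ restricted to $\mathcal{K}_\perp$ (this is the standard ``twirling'' or group-averaging argument, valid because $U_g$ preserves $\mathcal{K}_\perp$ by \textbf{Proposition 5}). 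The first summand is invariant under $\restr{U_g}{\mathcal{K}_\parallel}$ by Eq.~\eqref{eqn:SymmHamTimeIndRes}. Hence $U_g H'_I U_g^\dagger = H'_I$ on all of $\mathcal{H}_{SE}$, with no restriction to a subspace. One then checks that $H'_I$ is still Hermitian (each block is), so $\exp(-iH'_I t)$ is unitary.

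Next I would verify that $H'_I$ genuinely defines a physical dilation of the \emph{same} map $\phi(t)$. The key observation is that $H'_I$ agrees with $H_I$ on $\mathcal{K}_\parallel$ by construction — this is precisely Eq.~\eqref{eqn:equalHamParallel} — and since $\mathcal{K}_\parallel$ is, by its definition in Eq.~\eqref{eqn:defKrylovParallel}, invariant under $H_I$ and contains $\ket{\varphi_S}\otimes\ket{\psi_E}$ for every $\ket{\varphi_S}\in\mathcal{H}_S$, the orbit $\exp(-iH'_I t)\ket{\varphi_S}\otimes\ket{\psi_E}$ coincides with $\exp(-iH_I t)\ket{\varphi_S}\otimes\ket{\psi_E}$ for all $t$ and all $\ket{\varphi_S}$ (both stay inside $\mathcal{K}_\parallel$ and are generated by the same restricted Hamiltonian). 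Tracing out the environment in Eq.~\eqref{eqn:quantumMapPhysDil} then gives $\phi(t)$ unchanged. Finally, the last claim of the proposition — that any two $H_I, H'_I$ yielding valid dilations of $\phi(t)$ must agree on $\mathcal{K}_\parallel$ — follows because the reduced dynamics $\Tr_E[\exp(-iH_I t)\,\rho_S\otimes\ket{\psi_E}\!\bra{\psi_E}\,\exp(iH_I t)]$ being fixed for all $\rho_S$ and $t$ forces equality of all matrix elements $\bra{\varphi_{S,i}}\!\otimes\!\bra{\psi_E}\,\exp(\pm iH_I t)\,\ket{\varphi_{S,j}}\!\otimes\!\bra{\psi_E}$ — hence of the full restriction to the span of these orbits, which is exactly $\mathcal{K}_\parallel$.

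The main obstacle is the careful handling of the group-averaging step when $G$ is noncompact — e.g., $G=(\R,+)$ — since the Haar integral in Eq.~\eqref{eqn:newHamConstruction} does not literally converge. Here one should instead decompose $\mathcal{K}_\perp$ into eigenspaces of the Hermitian generator $J_E^{(\perp)}$ of $\restr{\pi_E(g)}{\mathcal{K}_\perp}$ and simply delete the off-diagonal blocks of $P_\perp H_I P_\perp$ between distinct eigenspaces; this projection onto the commutant is well-defined and plays the role of the twirl. A minor secondary point to check is that $\mathcal{K}_\parallel$ is genuinely $H_I$-invariant — but this is immediate from the Krylov characterization in Eq.~\eqref{eqn:defKrylovParallel}, since applying $H_I$ to any $H_I^k\ket{\varphi_{S,j}}\!\otimes\!\ket{\psi_E}$ lands in the same (maximal) Krylov subspace. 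Everything else is routine.
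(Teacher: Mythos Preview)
Your argument is correct, but it is more elaborate than what the paper actually does. The paper's proof in Appendix~\ref{sec:geometry} bypasses the twirling step entirely: instead of group-averaging $P_\perp H_I P_\perp$, it simply sets
\[
H_I' \;=\; P_\parallel H_I P_\parallel \;\oplus\; P_\perp,
\]
i.e., replaces the perpendicular block by the identity (any real scalar multiple of $P_\perp$ would do). Since a scalar trivially commutes with $\restr{U_g}{\mathcal{K}_\perp}$ for every unitary $U_g$, global invariance $U_g H_I' U_g^\dagger = H_I'$ is immediate and there is nothing to check about the group --- compact or not. Your Haar-twirl construction in Eq.~\eqref{eqn:newHamConstruction} is valid for compact $G$, and your commutant-projection fix for $(\R,+)$ is also sound, but both are unnecessary once you notice that the perpendicular block can be discarded rather than symmetrized. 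What your approach buys is that $H_I'$ stays ``close'' to $H_I$ on $\mathcal{K}_\perp$; what the paper's approach buys is a one-line proof with no case distinction.

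One small gap worth flagging: in your last paragraph, the claim that fixing the reduced dynamics $\phi(t)$ ``forces equality of all matrix elements $\bra{\varphi_{S,i}}\!\otimes\!\bra{\psi_E}\,\exp(\pm iH_I t)\,\ket{\varphi_{S,j}}\!\otimes\!\ket{\psi_E}$'' is not quite right as stated. Knowing $\phi(t)$ determines the Kraus operators $K_k(t)=\bra{e_k}U_I(t)\ket{\psi_E}$ only up to a (possibly $t$-dependent) environment unitary, which does not pin down those particular matrix elements. The paper does not spell out this ``moreover'' clause either --- it treats Eq.~\eqref{eqn:equalHamParallel} as part of the construction (both blocks agree on $\mathcal{K}_\parallel$ by design), not as a general uniqueness statement about arbitrary pairs of dilating Hamiltonians.
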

\begin{proof}
    In Appendix~\ref{sec:geometry}.
\end{proof}
\textbf{Proposition 6} is consistent with the results guaranteeing the existence of a covariant physical dilation that are already available in the literature \cite{MarvianMashhad2012,Marvian2016,Lostaglio2019,Faist2021,Luijk2023}.

\subsubsection{Non-minimal dilation, pure state of the environment}
We now suppose that the physical dilation we are interested in, $(\mathcal{H}_{SE},\exp(-i H_I t),\ket{\psi_E})$, leads to a non-minimal Stinespring dilation $(\mathcal{H}_{SE},V(t))$, with $V(t)=U_I(t)\ket{\psi_E}$. Then, we cannot straightforwardly apply the result in Eq.~\eqref{eqn:physDilationCovStarting}, which was obtained for minimal dilations. However, following the discussion in Sec.~\ref{sec:stinespring} we can construct a new minimal Stinespring dilation $(\mathcal{H}_{SE'},V'(t))$ on a Hilbert space $\mathcal{H}_{SE'}=\mathcal{H}_S\otimes\mathcal{H}_{E'}$ defined as \cite{Paulsen2003}:
\begin{equation}
    \label{eqn:minHilbertSpace}
    \begin{split}
    &\mathcal{H}_S\otimes\mathcal{H}_{E'}\\
    &=Span\{A\otimes\mathbb{I}_E V(t)\ket{\psi}_S,\, \forall A\in\mathcal{B}(\mathcal{H}_S),\,\forall \ket{\psi}\in\mathcal{H}_S\}.
    \end{split}
\end{equation}
We have assumed that the minimal dilated Hilbert space of the environment $\mathcal{H}_
{E'}$ does not change with time, apart from some isolated $t^*\in\R$ (e.g., for $t^*=0$). 

Next, we construct a new minimal physical dilation by restricting the action of $H_I$ to the action of a new $H_I':\mathcal{H}_{SE'}\rightarrow\mathcal{H}_{SE'}$. Note that a practical way to do so, following the $Span$ in Eq.~\eqref{eqn:minHilbertSpace} for any time $t$, is to consider the subspace defined in Eq.~\eqref{eqn:defKrylovParallel} and then extend it through the action of any possible operator $A\otimes\mathbb{I}_E$. Consequently, all the results presented in \textbf{Propositions 1} through \textbf{Proposition 6} remain valid for the restricted Hamiltonian $H_I'$ and the restricted environment Hilbert space $\mathcal{H}_{E'}$. 

In general terms, we will find a representation $U_g'=\pi_S(g)\otimes \pi_E'(g)$ of $G$ that acts only on $\mathcal{H}_{SE'}$. However, we can easily extend it to the full $\mathcal{H}_{SE}$ by defining $\pi_E(g)=\pi_E'(g)\oplus \mathbb{I}_{E'_\perp}$, where $\mathcal{H}_{E'_\perp}$ is the orthogonal complement of $\mathcal{H}_{E'}$ in $\mathcal{H}_E$, i.e., the Hilbert space that is not spanned by the minimal dilation defined according to Eq.~\eqref{eqn:minHilbertSpace}. The new $U_g=\pi_S(g)\otimes\pi_E(g)$ will then be a strong symmetry of the original $H_I$, with all the properties found in the previous section. 
 
\subsubsection{Mixed state of the environment}
\label{sec:mixedState}
We finally consider the case where the physical dilation $(\mathcal{H}_{SE},\exp(-i H_I t),\rho_E)$ is realized through a mixed state of the environment $\rho_E$, according to Eq.~\eqref{eqn:quantumMapPhysDilMixedState}. As explained in Sec.~\ref{sec:physDilation}, we can purify the state of the environment \cite{nielsenchuang} to obtain an equivalent non-minimal physical dilation with a pure state in an enlarged Hilbert space $\mathcal{H}_{SEC}$, say $\ket{\psi_{EC}}$. Then, if necessary, we can find the equivalent minimal dilation with pure state $\ket{\psi_{EC}}$ by means of the procedure explained in the previous subsection, and all the results from \textbf{Propositions 1} to \textbf{Proposition 6} follow accordingly.  \textbf{Example 2} in Sec.~\ref{sec:examples} illustrates this situation.

\subsection{Short-time collision models}
We now consider the case of a short-time collision model with evolution operator given by $U(\Delta t)$ in Eq.~\eqref{eqn:shortTimeCollTimeEv}. The short-time expansion is to be considered up to the second order in $\Delta t$ according to Eq.~\eqref{eqn:expansionCollModSecondOrder} (see also Appendix~\ref{sec:collLindblad} for further details). In this scenario, the symmetry of the map can be transformed into the symmetry of the Liouvillian $\mathcal{L}$ generating the quantum dynamical semigroup, according to Eq.~\eqref{eqn:collModSemigroup}.

Most of the results for the case of time-independent Hamiltonian evolution hold also for short-time collision models. Indeed, the stationary covariance property in \textbf{Proposition 1} is still valid, so Eq.~\eqref{eqn:covPropertyTimeIndHam} for a short $\Delta t$ is also true. Similarly, \textbf{Proposition 2} still holds as $U(\Delta t)\approx \mathbb{I}$ for $\Delta t \rightarrow 0^+$. Then, the Hamiltonian is still symmetric on the subspace spanned by dynamics $\Sigma_\parallel$ (Eqs.~\eqref{eqn:SymmHamTimeIndRes} and~\eqref{eqn:definitionParalSubspace}) according to \textbf{Proposition 3} and \textbf{Proposition 4}. The only difference lies in the structure of the subspace $\Sigma_\parallel$, which is not given by the sum of Krylov subspaces of maximal dimension in Eq.~\eqref{eqn:defKrylovParallel} anymore. Recalling the definition of a Krylov subspace in Eq.~\eqref{eqn:krylovDef}, the new subspace where the symmetry emerges is:
\begin{equation}
\label{eqn:krylovColl}
    \Sigma_\parallel= \mathcal{K}_\parallel^{(2)}:=\sum_{j=1}^n{\mathcal{K}_{r_{2,j}}\left(H_I,\ket{\varphi_{S,j}}\otimes\ket{\psi_E}\right)},
\end{equation}
for the usual basis $\{\ket{\varphi_{S,j}}\}_{j=1}^n$ of $\mathcal{H_S}$.
In other words, we consider only order-2 Krylov subspaces in the summation, as we also consider only the first two orders of $H_I$ in Eq.~\eqref{eqn:expansionCollModSecondOrder}. All the results of Sec.~\ref{sec:timeIndHamResult} hold with $\mathcal{K}_\parallel$ replaced by $\mathcal{K}_\parallel^{(2)}$. A nice example of this result can be found in \textbf{Example 3} in Sec.~\ref{sec:examples}.

\subsection{Time-dependent Hamiltonian}
Next, we consider the case of an evolution driven by a time-dependent Hamiltonian, with operator $U_I(t)$ given by Eq.~\eqref{eqn:evOpTimeDepHam}. It can be shown (see Appendix~\ref{sec:proofStatCovProp}) that \textbf{Proposition 1} and \textbf{Proposition 2} still hold, i.e., the stationary covariance property is valid, and the state of the environment must be invariant with respect to the action of the representation $\pi_E(g)$. So, we still have a symmetry on the subspace spanned by the dynamics. More precisely, \textbf{Proposition 3} still holds with a slightly refined formulation of Eq.~\eqref{eqn:SymmHamTimeIndRes}:
\begin{equation}
\label{eqn:symmGenericEv}
    \restr{U_g^\dagger H_I(t) U_g}{\Sigma_\parallel}=\restr{H_I(t)}{\Sigma_\parallel},\quad \forall g\in G,\; \forall t.
\end{equation}
The subspace  $\Sigma_\parallel$ spanned by the dynamics is given generically by Eq.~\eqref{eqn:definitionParalSubspace}. We do not have the nice characterization of $\Sigma_\parallel$ through Krylov subspaces anymore, so the remaining results in Sec.~\ref{sec:timeIndHamResult} cannot be trivially extended to the case of time-dependent Hamiltonians. 

\subsection{Continuous non-Hamiltonian evolution}
We now suppose that the Stinespring dilation is driven by an abstract non-Hamiltonian unitary operator $U_I(t)$, which is continuous and derivable for all times $t$ but does not form a one-parameter unitary group. We still assume that the Stinespring dilation with $V(t)=U_I(t)\ket{\psi_E}$ is minimal. If this is not the case, then we can find similar results by following the procedure for non-minimal dilations as in Sec.~\ref{sec:timeIndHamResult}.

We have not managed to prove the emergence of the stationary covariant property for such a generic scenario, but we have verified it in different examples if $U_I(t)$ is continuous and smooth. Therefore, let us assume that the representation $\pi_E$ satisfies the stationary covariance property. Then, we recover a symmetry in the subspace of the system+environment dynamics only if we can guarantee that $\pi_E(g)\ket{\psi_E}=\ket{\psi_E}$:
\begin{prop}[Symmetry for non-Hamiltonian evolutions] Suppose that $U_I(t)$ is continuous and derivable, and the representation $\pi_E$ is stationary. Next, suppose there exist a time $t^*$ at which 
\begin{equation}
\label{eqn:tstarU}
    U_I(t^*)\ket{\varphi_S}\otimes\ket{\psi_E}=\ket{\varphi_S'}\otimes\ket{\psi_E}
\end{equation}
  for all $\ket{\varphi_S}\in\mathcal{H}_S$. In other words, $U_I(t^*)$ acts as the identity on the Hilbert space of the environment. Then, 
  \begin{equation}
      \label{eqn:symmNonHamEv}    
     \restr{U_g^\dagger U_I(t) U_g}{\mathcal{H}_S\otimes Span\{\ket{\psi_E}\}}=\restr{U_I(t)}{\mathcal{H}_S\otimes Span\{\ket{\psi_E}\}}
  \end{equation}
  for all  $g\in G$ and for all $t$. If $U_g=\exp(-i g (J_S\otimes\mathbb{I}_E+\mathbb{I}_S\otimes J_E))$, then 
  \begin{equation}
      \restr{[J_S\otimes\mathbb{I}_E+\mathbb{I}_S\otimes J_E,U_I(t)]}{\mathcal{H}_S\otimes Span\{\ket{\psi_E}\}}=0,
  \end{equation} i.e., $J_S\otimes\mathbb{I}_E+\mathbb{I}_S\otimes J_E$ is a conserved quantity of the dynamics. 
\end{prop}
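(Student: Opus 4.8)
The plan is to reproduce, step by step, the logic already used for the time‑independent Hamiltonian case (\textbf{Propositions 1}--\textbf{3}), with one crucial modification: since $U_I(0)$ need not act trivially on the environment, the limit $t\to 0^{+}$ used to prove \textbf{Proposition 2} is unavailable, so I would instead evaluate everything at the special time $t^{*}$ supplied by Eq.~\eqref{eqn:tstarU}. The starting point is the covariance identity with the assumed stationary representation $\pi_E$, which, writing $U_g=\pi_S(g)\otimes\pi_E(g)$, reads
\[
U_g\,U_I(t)\,\ket{\varphi_S}\otimes\ket{\psi_E}=U_I(t)\,\pi_S(g)\ket{\varphi_S}\otimes\ket{\psi_E}
\]
for all $g\in G$, all $\ket{\varphi_S}\in\mathcal{H}_S$, and all $t$ --- this is Eq.~\eqref{eqn:physDilationCovStarting} with $\pi_E(g,t)\equiv\pi_E(g)$.

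The first and main step is to show $\pi_E(g)\ket{\psi_E}=\ket{\psi_E}$, the analogue of \textbf{Proposition 2}. Set $t=t^{*}$ and let $W:\mathcal{H}_S\to\mathcal{H}_S$ be defined by $U_I(t^{*})\ket{\chi_S}\otimes\ket{\psi_E}=W\ket{\chi_S}\otimes\ket{\psi_E}$; this $W$ is well defined and unitary because $U_I(t^{*})$ is unitary and, by Eq.~\eqref{eqn:tstarU}, preserves the finite‑dimensional subspace $\mathcal{H}_S\otimes\mathrm{Span}\{\ket{\psi_E}\}$. The identity above then becomes $\pi_S(g)W\ket{\varphi_S}\otimes\pi_E(g)\ket{\psi_E}=W\pi_S(g)\ket{\varphi_S}\otimes\ket{\psi_E}$. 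This is an equality of product vectors valid for every $\ket{\varphi_S}$, and $W$ is invertible, so the operator $\bigl(W\pi_S(g)\bigr)^{-1}\pi_S(g)W$ must send every vector to a scalar multiple of itself and hence equal $\lambda_g\mathbb{I}_S$; comparing the environment factors gives $\pi_E(g)\ket{\psi_E}=\lambda_g^{-1}\ket{\psi_E}$ with $\abs{\lambda_g}=1$. To conclude $\lambda_g=1$ I would take determinants of $\pi_S(g)W=\lambda_g W\pi_S(g)$, obtaining $\lambda_g^{\,n}=1$ with $n=\dim\mathcal{H}_S$; thus $g\mapsto\lambda_g$ is a continuous character of $G$ valued in the $n$‑th roots of unity, hence trivial whenever $G$ is connected --- in particular for $(\R,+)$ and $U(1)$. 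This phase‑removal is the only genuinely delicate point, and it is also where the continuity hypotheses enter (through continuity of $\pi_E$).

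With $\pi_E(g)\ket{\psi_E}=\ket{\psi_E}$ established, the remaining claims are immediate. Substituting it back into the covariance identity at arbitrary $t$ gives $U_gU_I(t)\ket{\varphi_S}\otimes\ket{\psi_E}=U_I(t)\pi_S(g)\ket{\varphi_S}\otimes\pi_E(g)\ket{\psi_E}=U_I(t)U_g\ket{\varphi_S}\otimes\ket{\psi_E}$, i.e.\ $[U_g,U_I(t)]$ annihilates $\mathcal{H}_S\otimes\mathrm{Span}\{\ket{\psi_E}\}$, which upon conjugation by $U_g$ is exactly Eq.~\eqref{eqn:symmNonHamEv}. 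For the one‑parameter case $U_g=\exp\!\bigl(-ig(J_S\otimes\mathbb{I}_E+\mathbb{I}_S\otimes J_E)\bigr)$, differentiating $\pi_E(g)\ket{\psi_E}=\ket{\psi_E}$ at $g=0$ yields $J_E\ket{\psi_E}=0$, and differentiating the restricted identity at $g=0$ yields $\restr{[J_S\otimes\mathbb{I}_E+\mathbb{I}_S\otimes J_E,\,U_I(t)]}{\mathcal{H}_S\otimes\mathrm{Span}\{\ket{\psi_E}\}}=0$. Reading this as a conservation law is routine: with $K=J_S\otimes\mathbb{I}_E+\mathbb{I}_S\otimes J_E$ and $\ket{\psi_{SE}(t)}=U_I(t)\ket{\varphi_S}\otimes\ket{\psi_E}$, the restricted commutator identity gives $K\ket{\psi_{SE}(t)}=U_I(t)\,K\ket{\varphi_S}\otimes\ket{\psi_E}$, whence $\bra{\psi_{SE}(t)}K\ket{\psi_{SE}(t)}=\bra{\varphi_S}\otimes\bra{\psi_E}K\ket{\varphi_S}\otimes\ket{\psi_E}$ is $t$‑independent, and linearity in $\rho_S$ upgrades this to $\Tr[K\,U_I(t)\rho_S\otimes\ket{\psi_E}\bra{\psi_E}U_I^{\dagger}(t)]=\mathrm{const}$.

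The main obstacle I anticipate --- beyond the bookkeeping --- is precisely the phase‑removal step $\lambda_g=1$: it is what brings in the connectedness of $G$ and the continuity of the dilation, and it makes transparent why the hypothesis in Eq.~\eqref{eqn:tstarU} (that $U_I(t^{*})$ acts as the identity on $\mathcal{H}_E$) is indispensable --- without such a $t^{*}$ there is no route to $\pi_E$‑invariance of $\ket{\psi_E}$, and the global symmetry $U_g$ may genuinely be broken at the level of the dilation, in line with the negative results obtained in the excerpt for generic time‑independent maps.
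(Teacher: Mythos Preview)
Your proposal is correct and follows essentially the same route as the paper: evaluate the covariance identity Eq.~\eqref{eqn:physDilationCovStarting} (with stationary $\pi_E$) at the special time $t^{*}$ to deduce $\pi_E(g)\ket{\psi_E}=\ket{\psi_E}$, and then read off the restricted symmetry directly. The paper's own proof is a one-line remark that asserts $\pi_E(g)\ket{\psi_E}=\ket{\psi_E}$ ``immediately'' from Eq.~\eqref{eqn:tstarU} without discussing the intermediate unitary $W$ or the possible phase $\lambda_g$; your determinant-plus-connectedness argument is therefore more careful than the paper rather than a different approach, and it makes explicit a mild hypothesis (connectedness of $G$) that the paper leaves implicit but which is satisfied in all the examples it treats.
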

\begin{proof}
    It comes immediately from the stationary covariance property and the fact that there is a time $t^*$ for which the state of the environment is not modified by the application of $U_I(t)$ (see Eq.~\eqref{eqn:tstarU}), hence $\pi_E(g)\ket{\psi_E}=\ket{\psi_E}$, obtaining the assertion from Eq.~\eqref{eqn:physDilationCovStarting}.
\end{proof}

\textbf{Example 4} in Sec.~\ref{sec:examples} showcases a symmetric non-Hamiltonian dilation. In contrast, if the above conditions are not satisfied, then we can find physical dilations of covariant maps that do not display any symmetry of the closed system+environment dynamics, not even on a reduced subspace of $\mathcal{H}_{SE}$. The interested readers can find an example of such a dilation in \textbf{Example 5}.

\subsection{Time-independent map}
Finally, we consider a time-independent map $\phi$ covariant with respect to the action of $G$, and with physical dilation $(\mathcal{H}_{SE},U_I,\ket{\psi_E})$ (or analogously with a mixed state of the environment). In general terms, if no additional assumptions are made, then the covariant property of the map $\phi$ does not imply any constraints at the level of the dilation. In particular, $\ket{\psi_E}$ does not necessarily have to be  invariant under the action of $G$. If, instead, the representation $\pi_E(g)$ leaves $\ket{\psi_E}$ invariant, then as a trivial consequence of Eq.~\eqref{eqn:physDilationCovStarting} we obtain the same result as in Eq.~\eqref{eqn:symmNonHamEv} (removing the time dependence). Note that, once again, the symmetry does not necessarily appear on the full $\mathcal{H}_{SE}$, but only on the subspace $\mathcal{H}_S\otimes Span\{\ket{\psi_E}\}$. \textbf{Example 6} and \textbf{Example 7} in Sec.~\ref{sec:examples} illustrate these considerations. In particular, \textbf{Example 6} shows how to tackle scenarios with non-minimal dilations and an emerging strong symmetry in a subspace of $\mathcal{H}_{SE}$. In contrast, \textbf{Example 7} provides a Stinespring dilation of the Landau-Streater channel \cite{Landau1993,Filippov2019} that cannot give rise to any strong symmetry. 

While the weak symmetry of the map $\phi$ does not necessarily generate a strong symmetry of the unitary evolution $U_I$, it is always guaranteed that a symmetric dilation can be constructed by following the procedures in Refs.~\cite{MarvianMashhad2012,Faist2021}. In particular, we can always engineer a representation that acts trivially on some state of the environment $\ket{\psi_E}$, and build a suitable symmetric physical dilation with some invariant $U_I$ and $\ket{\psi_E}$. 

\section{Examples}
\label{sec:examples}

\subsection*{Example 1: Emergence of a strong symmetry with time-independent Hamiltonian evolution}
We define the quantum map $\phi(t)$ acting on a single qubit of the system:
\begin{equation}
\label{eqn:mapEx1}
\begin{split}
    &\phi(t)[\ket{1_S}\!\bra{1_S}]=\ket{1_S}\!\bra{1_S},\\&\phi(t)[\ket{0_S}\!\bra{0_S}]=\ket{0_S}\!\bra{0_S}\cos^2t+\ket{1_S}\!\bra{1_S}\sin^2t,    \\
    &\phi(t)[\ket{0_S}\!\bra{1_S}]=\ket{0_S}\!\bra{1_S}\cos t,\\&\phi(t)[\ket{1_S}\!\bra{0_S}]=\ket{1_S}\!\bra{0_S}\cos t.
\end{split}
\end{equation}
The readers can verify that this is a well-defined quantum map, and that it can be represented through the physical dilation
\begin{equation}
\label{eqn:Ex1PhysDil}
\phi(t)[\rho_S]=\Tr_E[U_I(t)\rho_S\otimes\ket{1_E}\!\bra{1_E} U_I^\dagger(t)].
\end{equation}
The environment is also made of a single qubit, and  $U_I(t)=\exp(-i H_I t)$, where $ H_I=(\sigma_S^+\sigma_E^-+H.c.)$. $\sigma_S^+$ is the raising operator of the system and analogously for $\sigma_E^+$. 

We introduce the one-parameter group whose elements are $\pi_S(g)=\exp(-i g N_S)$, where $N_S=\ket{1_S}\!\bra{1_S}$ and $g\in[0,2\pi)$. This is a representation of the group $U(1)$ made of the direct sum of two one-dimensional irreducible representations, one of them being trivial. Clearly, $\phi(t)$ is phase-covariant: 
\begin{equation}
\label{eqn:ex1PhaseCovariant}
    \pi_S(g)\phi(t)[\rho_S]\pi_S^\dagger(g)=\phi(t)[\pi_S(g)\rho_S\pi_S^\dagger(g)].
\end{equation} 
Note, however, that this phase covariance corresponds to a weak symmetry, as $N_S$ is not a conserved quantity of the dynamics. Let us investigate how this weak symmetry is reflected in the unitary dilation. 

With abuse of notation, we define the isometry of the Stinespring dilation as $V(t)=U_I(t) \ket{1_E}$. Introducing $V_g(t)=V(t)\pi_S(g)$, we construct another minimal Stinespring dilation\footnote{Note that the environment is a qubit, so any non-trivial dilation must be minimal.} according to the discussion in Sec.~\ref{sec:CovStinespring}. Then, we  can find a unitary $U_g$ such that $V_g(t)=U_g V(t)$ and $U_g=\pi_S(g)\otimes\pi_E(g)$. $\pi_E(g)$ does not depend on $t$ due to the stationary covariance property. Defining $N_E= \ket{1_E}\!\bra{1_E}$, we find:
\begin{equation}
\label{eqn:UgEx1}
U_g=\exp(-i g [N_S\otimes\mathbb{I}_E+\mathbb{I}_S\otimes (N_E-\mathbb{I}_E)]),
\end{equation}
corresponding to $\pi_E(g)=\exp(-i g  (N_E-\mathbb{I}_E))$.
Note that $\pi_E(g)\ket{1_E}=\ket{1_E}$, consistently with \textbf{Proposition 2}.

\textbf{Proposition 3} yields $\restr{U_g^\dagger H_I U_g}{\mathcal{K}_\parallel}=\restr{H_I}{\mathcal{K}_\parallel}$, where $\mathcal{K}_\parallel=Span\{\ket{0_S 1_E},\ket{1_S 1_E},H_I\ket{0_S 1_E}=\ket{1_S0_E}\}$, while $\mathcal{K}_\perp=Span\{\ket{0_S0_E}\}$. Note that $H_I$ and $U_g$ are block-diagonal in these subspaces, as predicted by \textbf{Proposition 5}. Moreover, we observe that $U_g^\dagger H_I U_g\ket{0_S 0_E}=H_I\ket{0_S 0_E}=0$. Therefore, the symmetry is valid all over the Hilbert space $\mathcal{H}_{SE}=\mathcal{H}_S\otimes\mathcal{H}_E$, and it corresponds to the immediate result that the total number of spin excitations is conserved according to \textbf{Proposition 4}, $[H_I,N_S+N_E]=0$. So, the weak phase-covariant symmetry of the map $\phi(t)$ induces a strong symmetry of the unitary dynamics in the dilated Hilbert space. 

Note that, while $U_g$ is a global symmetry on $\mathcal{H}_{SE}$, the relation $U_I(t)\pi_S(g)\otimes\mathbb{I}_E=U_g U_I(t)$ is valid only on the subspace $\mathcal{H}_S\otimes Span\{\ket{1_E}\}$. Moreover, if we considered the quantum map defined by the same dilation but with $H_I=\sigma_S^-\sigma_E^-+H.c.$, we would clearly obtain the same results but with conserved $N_S-N_E$.

Since $dim(\mathcal{K}_\perp)=1$, we cannot construct a Hamiltonian $H_I$ which gives rise to the same $\phi(t)$ and violates the global symmetry on $\mathcal{K}_\perp$ (for a scenario where such violation emerges we refer the readers to \textbf{Example 2}). However, note that we may employ a new Hamiltonian $H_I'$ such that, for instance, with abuse of notation $H_I'\mathcal{K}_\parallel=H_I\mathcal{K}_\parallel$, while $H_I'\ket{0_S0_E}=\ket{0_S0_E}$, and we would still obtain a faithful dilation of $\phi(t)$. This corresponds to the freedom in the definition of $H_I$ on the subspace $\mathcal{K}_\perp$, formalized by \textbf{Proposition 6}.  

\subsection*{Example 2: Time-independent Hamiltonian evolution and mixed state of the environment}
We consider a quantum map defined by a physical dilation that is analogous to Eq.~\eqref{eqn:Ex1PhysDil}, but with a mixed state of the environment:
\begin{equation}
\label{eqn:dilaSecondEx}
\phi(t)[\rho_S ]=\Tr_E[U_I(t)\rho_S\otimes\rho_E U_I^\dagger(t)],
\end{equation}
where both the system and the environment are individual qubits, $U_I(t)=\exp(-i  t (\sigma_S^+\sigma_E^-+H.c.))$, and $\rho_E=(1-c_1)\ket{1_E}\!\bra{1_E}+c_1\ket{0_E}\!\bra{0_E}$, $c_1\in(0,1)$. Once again, $\phi(t)$ is phase-covariant: $\pi_S(g)\phi(t)[\rho_S]\pi_S^\dagger(g)=\phi(t)[\pi_S(g)\rho_S\pi_S^\dagger(g)]$, where $\pi_S(g)=\exp(-i g N_S)$ is defined as in the previous example. 

Following the discussion in Sec.~\ref{sec:mixedState}, we purify the state $\rho_E$ by adding a qubit living in the Hilbert space $\mathcal{H}_C$, and the new state of the environment is: 
\begin{equation}
    \ket{\Psi_{EC}}=\sqrt{1-c_1}\ket{11_{EC}}+\sqrt{c1}\ket{00_{EC}}.
\end{equation} 
The dilation of the quantum map is then driven by $U_I(t)\otimes\mathbb{I}_C$, with interaction Hamiltonian $H_I=(\sigma_S^+\sigma_E^-+H.c.)\otimes\mathbb{I}_C$. The readers can verify that this physical dilation with pure state $\ket{\Psi_{EC}}$ corresponds to a minimal Stinespring dilation with $V(t)=U_I(t)\otimes\mathbb{I}_C\ket{\Psi_{EC}}$, for almost all times $t$. Following the standard procedure, we find $U_g=\pi_S(g)\otimes\pi_E(g)$ such that $V(t) \pi_S(g)=U_g V(t)$. We obtain:
\begin{equation}
    \pi_E(g)=\exp(i g \ket{0_E}\!\bra{0_E})\otimes\exp(-i g \ket{0_C}\!\bra{0_C}).
\end{equation}
Once again, observe that $\pi_E(g)\ket{\Psi_{EC}}=\ket{\Psi_{EC}}$. Note that $\mathcal{K}_\parallel=Span\{\ket{0_S0_E0_C},\ket{0_S1_E1_C},\ket{1_S0_E1_C},$ $\ket{1_S0_E0_C},\ket{0_S1_E0_C},\ket{1_S1_E1_C}\}$, while $\mathcal{K}_\perp=Span\{\ket{0_S0_E1_C},\ket{1_S1_E0_C}\}$. In accordance with \textbf{Proposition 3}, $\restr{U_g^\dagger H_I U_g}{\mathcal{K}_\parallel}=\restr{H_I}{\mathcal{K}_\parallel}\quad \forall g\in G$. 

Incidentally, we find $\restr{U_g^\dagger H_I U_g}{\mathcal{K}_\perp}=\restr{H_I}{\mathcal{K}_\perp}\quad \forall g\in G$ as well. This corresponds to the conservation of the operator $J=N_S+N_E-N_C$ over the whole Hilbert space $\mathcal{H}_S\otimes\mathcal{H}_E\otimes\mathcal{H}_C$.  Given that the Hilbert space $\mathcal{H}_C$ is not driven by the dynamics, $J'=N_S+N_E$ is conserved as well over $\mathcal{H}_S\otimes\mathcal{H}_E$, recovering the result of the first example.

The conservation of $J$ over $\mathcal{K}_\perp$ is not a requirement for a suitable physical dilation of $\phi(t)$. Indeed, let us construct a different physical dilation with the same Hilbert space $\mathcal{H}_{SE}$ and $\rho_E$ as in Eq.~\eqref{eqn:dilaSecondEx}, but with a different interaction Hamiltonian $H_I'$ in $U_I(t)$:
\begin{equation}
    \begin{split}
        H_I'=&(\sigma_S^+\sigma_E^-+H.c.)\otimes\mathbb{I}_C+\ket{0_S0_E1_C}\bra{1_S1_E0_C}\\&+\ket{1_S1_E0_C}\bra{0_S0_E1_C}.
    \end{split}
\end{equation} Note that $ \restr{H_I'}{\mathcal{K}_\parallel}=\restr{H_I}{\mathcal{K}_\parallel}$, where $H_I$ is the Hamiltonian of the previous example. Therefore, according to \textbf{Proposition 6}, $H_I'$ correctly generates the same quantum map $\phi(t)$. However, unlike in the previous case, the conservation of $J=N_S+N_E-N_C$ only holds in $\mathcal{K}_\parallel$, while $\restr{U_g^\dagger H_I' U_g}{\mathcal{K}_\perp}\neq\restr{H_I'}{\mathcal{K}_\perp}$. We thus observe an example of non-global conservation, where the constraint imposed by the phase-covariant map $\phi(t)$ generates a strong symmetry only on the subspace $\mathcal{K}_\parallel$ of the total Hilbert space.

\subsection*{Example 3: Differences between time-independent Hamiltonian evolution and short-time collision model}
Let us consider a quantum map acting on a \textit{ququart} living in $\mathcal{H}_S$, i.e., a $4$-level quantum system. We define it through its physical dilation with another ququart in $\mathcal{H}_E$:
\begin{equation}
    \phi(t)[\rho_S]=\Tr_E[U_I(t)\rho_S\otimes\ket{0_E}\!\bra{0_E}U_I^\dagger(t)],
\end{equation}
and $U_I(t)=\exp(-i H_I t)$, with the interaction Hamiltonian
\begin{equation}
\label{eqn:ex3HI}
    H_I = a_S^\dagger a_E+a_S a_E^\dagger.
\end{equation}
$a_S$ is the system annihilation operator truncated at the fourth level, and analogously for $a_E$. It is immediate to see that the map is phase-covariant with respect to the action of $\pi_S(g)=\exp(-i g a_S^\dagger a_S)$. 

We will consider both the map $\phi(t)$ continuous in time, and the map defined through a short-time collision model, namely $\phi(\Delta t)$ for $\Delta t\ll 1$. Note that $\Tr_E[a_E\ket{0_E}\!\bra{0_E}]=\Tr_E[a_E^\dagger\ket{0_E}\!\bra{0_E}]=0$, so the derivation of the map $\phi(\Delta t)$ is consistent with the discussion in Appendix~\ref{sec:collLindblad}. In particular, $\phi(\Delta t)[\rho_S]=\rho_S+\Delta t \mathcal{L}[\rho_S]$, and
\begin{equation}
\label{eqn:ex3Liou}
    \mathcal{L}[\rho_S]=\gamma_\downarrow \left(a_S\rho_S a_S^\dagger-\frac{1}{2}\{a_S^\dagger a_S,\rho_S\}\right), 
\end{equation}
with a suitable decay rate $\gamma_\downarrow$ (see Appendix~\ref{sec:collLindblad} for further details). 

Note that both the continuous-time map $\phi(t)$ and the map $\phi(\Delta t)$ for the collision model are defined through the same Hamiltonian $H_I$ in Eq.~\eqref{eqn:ex3HI}. It is immediate to realize that this Hamiltonian is symmetric with respect to $U_g=\pi_S(g)\otimes\pi_E(g)$, given
\begin{equation}
    \pi_E(g)=\exp(-i g a_E^\dagger a_E).
\end{equation}
This symmetry corresponds to the conservation of the total number of excitations in the two-ququart system: $[H_I,a_S^\dagger a_S+a_E^\dagger a_E]=0$, which is valid all over $\mathcal{H}_{SE}$. Clearly, $\pi_E(g)\ket{0_E}=\ket{0_E}$.

Next, let us analyze the subspace spanned by the dynamics driven by $H_I$ for the case of the continuous map. According to Eq.~\eqref{eqn:defKrylovParallel}, we find:
\begin{equation}
\begin{split}
    &\mathcal{K}_\parallel=Span\{\ket{0_S,0_E},\ket{1_S,0_E},\ket{2_S,0_E},\ket{3_S,0_E},\\
    &\ket{0_S,1_E},\ket{1_S,1_E},\ket{0_S,2_E},\ket{1_S,2_E},\ket{2_S,1_E},\ket{0_S,3_E}\}.    
\end{split}
\end{equation}
Note that this is, quite intuitively, the subspace of states having at most $3$ total excitations.
Conversely,
\begin{equation}
\begin{split}
    &\mathcal{K}_\perp=Span\{\ket{2_S,2_E},\ket{3_S,3_E},\ket{2_S,3_E},\ket{3_S,2_E}\}.    
\end{split}
\end{equation}

For the  map $\phi(\Delta t)$, instead, the subspace spanned by the dynamics is given by $\mathcal{K}_\parallel^{(2)}$ in Eq.~\eqref{eqn:krylovColl}, i.e., we do not have to include the states of the type $H^3\ket{\varphi_S}\otimes\ket{0_E}$ in the $Span$. The readers can verify that $\mathcal{K}_\parallel^{(2)}$ is formed by the $Span$ of all the vectors in $\mathcal{K}_\parallel$ apart from $\ket{0_S3_E}$. This means that we can still generate the same physical dilation of $\phi(\Delta t)$ in Eq.~\eqref{eqn:ex3Liou} by using a new Hamiltonian $H_I'=H_I+\ket{0_S3_E}\!\bra{2_S3_E}+\ket{2_S3_E}\!\bra{0_S3_E}$, which clearly violates the conservation of the total number of excitations outside $\mathcal{K}_\parallel^{(2)}$. However, we could not use $H_I'$ for the dilation of $\phi(t)$, as it would not be block-diagonal in $\mathcal{K}_\parallel$.

\subsection*{Example 4: Non-Hamiltonian continuous evolution with a strong symmetry of the dilation}
Once again, we consider a quantum map acting on a single qubit of the system and defined through a dilation with a single qubit of the environment: 
\begin{equation}
\label{eqn:mapNoGlobal}
\phi(t)[\rho_S]=\Tr_E[U_I(t)\rho_S\otimes\ket{1}_E\!\bra{1} U_I^\dagger(t)].
\end{equation}
In the basis $\{\ket{1_S1_E},\ket{1_S0_E},\ket{0_S1_E},\ket{0_S0_E}\}$,
\begin{equation}
U_I(t)=\begin{pmatrix}
\lambda_1(t)&-\lambda_0(t)&0&0\\
0&0&0&1\\
0&0&1&0\\
\lambda_0(t)&\lambda_1(t)&0&0\\
\end{pmatrix},
\end{equation}
with $\lambda_1(t)=\sqrt{\exp(-\gamma t)}$, $\lambda_0(t)=\sqrt{1-\exp(-\gamma t)}$, and $\gamma>0$. 
Note that the unitary operator is not a semigroup as a function of time: $U_I(t)U_I(s)\neq U_I(t+s)$. Moreover, $U_I(0)\ket{\varphi_S}\otimes\ket{1_E}=\ket{\varphi_S}\otimes\ket{1_E}$ for all $\ket{\varphi_S}\in\mathcal{H}_S$.

The action of the quantum map reads:
\begin{equation}
\begin{split}
    &\phi(t)[\ket{1_S}\!\bra{1_S}]=\ket{1_S}\!\bra{1_S}+(1-\exp({-\gamma t}))\ket{0_S}\!\bra{0_S},\\&\phi(t)[\ket{0_S}\!\bra{0_S}]=\ket{0_S}\!\bra{0_S},    \\
    &\phi(t)[\ket{0_S}\!\bra{1_S}]=\sqrt{\exp(-\gamma t)}\ket{0_S}\!\bra{1_S},\\&\phi(t)[\ket{1_S}\!\bra{0_S}]=\sqrt{\exp(-\gamma t)}\ket{1_S}\!\bra{0_S}.
\end{split}
\end{equation}
$\phi(t)$ makes the qubit decay toward the vacuum state. Clearly, the quantum map is once again phase-covariant: $\pi_S(g)\phi(t)[\rho_S]\pi_S^\dagger(g)=\phi(t)[\pi_S(g)\rho_S\pi_S^\dagger(g)]$, where $\pi_S(g)=\exp(-i g N_S)$ is defined as in the previous examples. 

The isometry $V(t)$ associated with the Stinespring dilation defined through Eq.~\eqref{eqn:mapNoGlobal} is:
\begin{equation}
    V(t)=\begin{pmatrix}
    \lambda_1(t)& 0\\
    0& 0\\
    0& 1\\
    \lambda_0(t)& 0\\
    \end{pmatrix}.
\end{equation}
If $V_g(t)=V(t)\pi_S(g)$, following the usual procedure we find $U_g$ such that $V_g(t)=U_g V(t)$:
\begin{equation}
    U_g=\exp(-i g N_S)\otimes\exp(-i g (\mathbb{I}_E-N_E)).
\end{equation}

$U_g=\pi_S(g)\otimes\pi_E(g)$ does not depend on time, so  it satisfies the stationary covariance property. Therefore, together with the fact that $U_I(0)$ acts as the identity on $\ket{\varphi_S}\otimes\ket{1_E}$, we immediately find $\pi_E(g)\ket{1}_E=\ket{1}_E$, and
\begin{equation}
    \restr{U_g^\dagger U_I(t) U_g}{\mathcal{H}_S\otimes Span\{\ket{1_E}\}}=\restr{U_I(t)}{\mathcal{H}_S\otimes Span\{\ket{1_E}\}},
\end{equation} 
for all $g\in G$, for all $t\geq 0$. $N_S-N_E$ is conserved when we start the dynamics from $\ket{\varphi_S}\otimes\ket{1_E}$. However, this is not true when we start it from $\ket{\varphi_S}\otimes\ket{0_E}$, and $U_g$ is a not a symmetry of the whole $\mathcal{H}_{SE}$, according to \textbf{Proposition 7}.

Note that we may replace the evolution $U_I(t)$ with one displaying global conservation of $N_S-N_E$ over the full $\mathcal{H}_{SE}$. Take, for instance,
\begin{equation}
U_I'(t)=\begin{pmatrix}
\lambda_1(t)&0&0&-\lambda_0(t)\\
0&1&0&0\\
0&0&1&0\\
\lambda_0(t)&0&0&\lambda_1(t)\\
\end{pmatrix}.
\end{equation}
The above evolution in Eq.~\eqref{eqn:mapNoGlobal} would still give rise to a physical dilation of the quantum map $\phi(t)$ with the same $V(t)$, and $N_S-N_E$ would be a conserved quantity of the dynamics in $\mathcal{H}_{SE}$. 

\subsection*{Example 5: Non-Hamiltonian continuous evolution without emerging strong symmetry}
Let us now consider a single-qubit quantum map defined through the Stinespring isometry:
\begin{equation}
     V(t)=\begin{pmatrix}
    1& 0\\
    0& -i \sin t\\
    0& \cos t\\
    0 & 0\\
    \end{pmatrix}.
\end{equation}
The above matrix is written in the basis $\{\ket{1_S},\ket{0_S}\}$ for the domain and $\{\ket{1_S1_E},\ket{1_S0_E},\ket{0_S1_E},\ket{0_S0_E}\}$ for the codomain. The readers can verify that the map defined by $\phi^\dagger(t)[A]=V^\dagger(t) A\otimes\mathbb{I}_E V(t)$ is that in Eq.~\eqref{eqn:mapEx1} of \textbf{Example 1}. We already know that $\phi(t)$ is phase-covariant, and $V(t)\pi_S(g)=U_gV(t)$, where $U_g=\pi_S(g)\otimes\pi_E(g)$ is given by Eq.~\eqref{eqn:UgEx1}. Note that the stationary covariance property is satisfied.

We will now construct a physical dilation of $\phi(t)$ based on a continuous non-Hamiltonian evolution $U_I(t)$, which does not possess any strong symmetry. We define:
\begin{equation}
    U_I(t)=\begin{pmatrix}
0&1&0&0\\
\cos t&0&0&-i\sin t\\
-i\sin t&0&0&\cos t\\
0&0&1&0\\
\end{pmatrix},
\end{equation}
and, with abuse of notation, $U_I(t)\ket{0_E}=V(t)$. Now, note that $\pi_E(g)\ket{0_E}=e^{ig}\ket{0_E}\neq\ket{0_E}$, since, according to Eq.~\eqref{eqn:UgEx1}, $\pi_E(g)=\exp{-i g  (N_E-\mathbb{I}_E)}$. Therefore, the readers can verify that $U_g$ is not a symmetry of $U_I(t)$ in any subspace of the global Hilbert space: $U_g^\dagger U_I(t) U_g \neq U_I(t)$. Hence, \textbf{Proposition 7} does not hold for this physical dilation. 

\subsection*{Example 6: Time-independent map and symmetry of the non-minimal dilation}
We consider a time-independent single-qubit quantum map $\phi$ that can be written in the basis $\{\ket{1_S}\!\bra{1_S},\ket{1_S}\!\bra{0_S},\ket{0_S}\!\bra{1_S},\ket{0_S}\!\bra{0_S}\}$ as:
\begin{equation}
    \phi=\begin{pmatrix}
    1&0&0&1/2\\
    0&1/\sqrt{2}&0&0\\
    0&0&1/\sqrt{2}&0\\
    0&0&0&1/2\\
    \end{pmatrix}.
\end{equation}
Once again, $\phi$ is clearly phase-covariant under the action of $\pi_S(g)=\exp(-i g N_S)$.

We can build a physical dilation of this map through a qutrit living in $\mathcal{H}_E$ and the unitary operator\footnote{The basis of the environment is ordered as $\ket{2_E}$, $\ket{1_E}$, $\ket{0_E}$.}
\begin{equation}
U_I=\begin{pmatrix}
0&0&0&1/\sqrt{2}&0&1/\sqrt{2}\\
-1/\sqrt{2}&1/\sqrt{2}&0&0&0&0\\
0&0&1&0&0&0\\
1/\sqrt{2}&1/\sqrt{2}&0&0&0&0\\
0&0&0&0&1&0\\
0&0&0&-1/\sqrt{2}&0&1/\sqrt{2}\\
\end{pmatrix}.
\end{equation}
The readers can verify that
\begin{equation}
\phi(\rho_S)=\Tr_E[U_I\rho_S\otimes\ket{0_E}\!\bra{0_E} U_I^\dagger].
\end{equation}

The isometry $V$ associated with the corresponding Stinespring dilation is:
\begin{equation}
    V=\begin{pmatrix}
    0&1/\sqrt{2}\\
    0&0\\
    1&0\\
    0&0\\
    0&0\\
    0&1/\sqrt{2}\\
    \end{pmatrix},
\end{equation}
showing that the subspace spanned by $\ket{1}_E$ is negligible, and the dilation is not minimal. It can be reduced to a minimal one by considering only the subspace $\mathcal{H}_{E_1}=Span\{\ket{2}_E,\ket{0}_E\}$, and the reduced operator $V'$ that, in the codomain basis $\{\ket{1_S2_E},\ket{1_S0_E},\ket{0_S2_E},\ket{0_S0_E}\}$, has the form
\begin{equation}
    V^{(restr)}=\begin{pmatrix}
    0&1/\sqrt{2}\\
    1&0\\
    0&0\\
    0&1/\sqrt{2}\\
    \end{pmatrix}.
\end{equation}
We now find $U_g^{(restr)}$ such that $V^{(restr)}\pi_S(g)=U_g^{(restr)}V^{(restr)}$. We obtain:
\begin{equation}
\label{eqn:ex6Ugprime}
    U_g^{(restr)}=\exp(- i g N_S)\otimes\exp(i g \ket{2_E}\!\bra{2_E}).
\end{equation}
Next, we define the unitary operator $\pi_E(g)=\exp(i g \ket{2_E}\!\bra{2_E})\oplus\ket{1_E}\!\bra{1_E}$ (i.e., we extend it to the full $\mathcal{H}_E$), and $U_g=\pi_S(g)\otimes\pi_E(g)$. 

We observe $U_g\ket{0}_E=\ket{0}_E$, therefore 
\begin{equation}
\label{eqn:ex6symm}
    \restr{U_g^\dagger U_I U_g}{\mathcal{H}_S\otimes\ket{0}_E}=\restr{U_I}{\mathcal{H}_S\otimes\ket{0}_E},
\end{equation} 
and $N_S\otimes\mathbb{I}_E-\mathbb{I}_S\otimes\ket{2}_E\!\bra{2}$ is conserved under the action of $U_I$ on the subspace $\mathcal{H}_S\otimes Span\{\ket{0_E}\}$. However, this conservation law does not hold when $U_I$ acts on any other state of the environment, and $U_g$ is a not a symmetry on the whole $\mathcal{H}_{SE}$.

Now, let us consider a slightly different physical dilation of $\phi$, based on the same qutrit space $\mathcal{H}_E$ and on the same initial state $\ket{0_E}$, but with different unitary operator:
\begin{equation}
U_I'=\begin{pmatrix}
0&0&0&1/\sqrt{2}&0&1/\sqrt{2}\\
0&0&1&0&0&0\\
-1/\sqrt{2}&1/\sqrt{2}&0&0&0&0\\
1/\sqrt{2}&1/\sqrt{2}&0&0&0&0\\
0&0&0&-1/\sqrt{2}&0&1/\sqrt{2}\\
0&0&0&0&1&0\\
\end{pmatrix}.
\end{equation}
The readers can verify that this is indeed a suitable non-minimal dilation of $\phi$. 
Then, the reduced space of the minimal dilation is $\mathcal{H}'_{E_1}=Span\{\ket{2}_E,\ket{1}_E\}$, and we find the same $U_g^{(restr)}$ as in Eq.~\eqref{eqn:ex6Ugprime}. By suitably extending $U_g^{(restr)}$ to the full $\mathcal{H}_E$, as we have done previously, we still obtain Eq.~\eqref{eqn:ex6symm}. Therefore, note that the strong symmetry of the dilation emerges in exactly the same way even if the initial state of the environment does not belong to the reduced subspace of the minimal dilation. 

\subsection*{Example 7: Covariant Stinespring dilation of the Landau-Streater channel}
The Landau-Streater channel is defined as \cite{Landau1993,Filippov2019}:
\begin{equation}
    \label{eqn:LandauStreater}
    \phi_{LS}[\rho]=\frac{1}{j(j+1)}\sum_{\alpha=1,2,3}J_\alpha^{(n)}\rho J_\alpha^{(n)},
\end{equation}
where $\rho$ is a density matrix on a Hilbert space of dimension $n=2j+1$, and $J_\alpha^{(n)}$ are $n$-dimensional generators of $\mathfrak{su}(2)$. This channel is covariant under the action of $SU(2)$ \cite{Filippov2019}:
\begin{equation}
    \label{eqn:landauCov}
    \pi_S(g)\phi_{LS}[\rho]\pi_S^\dagger(g)=\phi_{LS}[\pi_S(g)\rho\pi_S^\dagger(g)],
\end{equation}
where $g$ labels a generic group element, characterized by the rotation angle $\theta\in\R$ and the rotation unit vector $\mathbf{n}\in\R^3$:
\begin{equation}
\label{eqn:landauPig}
    \pi_S(g)=\exp(-i \theta \sum_{\alpha=1,2,3} n_\alpha J_\alpha^{(n)}),\qquad \sum_{\alpha=1,2,3} \abs{n_\alpha}^2=1.
\end{equation}

The Landau-Streater channel can be dilated into the Hilbert space\footnote{Note the inverted order in the definition of the dilated Hilbert space, which we introduce to keep the same notation as in Ref.~\cite{Filippov2019} and the elegant way of writing $V$ in Eq.~\eqref{eqn:LandauV}.} $\mathcal{H}_E\otimes\mathcal{H}_S$  through the isometry $V$, defined as:
\begin{equation}
\label{eqn:LandauV}
    V=\frac{1}{\sqrt{j(j+1)}}\begin{pmatrix}
    J_1^{(n)}\\
    
    J_2^{(n)}\\
    
    J_3^{(n)}\\
    \end{pmatrix}.
\end{equation}
Note that $dim(\mathcal{H}_E)=3$, and the dilation is always minimal. The overall Hilbert space has dimension $dim(\mathcal{H}_{SE})=3n=3(2j+1)$. We will now show that this dilation always satisfies a covariance property with respect to the adjoint representation of $SU(2)$ \cite{cornwell}. In particular, create the group representation $\pi_E(g)$ such that: 
\begin{equation}
    \label{eqn:landauCovT}
    \pi_E(g)=\exp(-i \theta \sum_{\alpha=1,2,3} n_\alpha T_\alpha^{(ad)}),\qquad \sum_{\alpha=1,2,3} \abs{n_\alpha}^2=1.
\end{equation}
$T_\alpha^{(ad)}$ are the $3$-dimensional generators of $\mathfrak{su}(2)$ giving rise to the adjoint representation:
\begin{equation}
\begin{split}
    &T_1^{(ad)}=\begin{pmatrix}
    0&0&0\\
    0&0&-i\\
    0&i&0
    \end{pmatrix},\quad
    T_2^{(ad)}=\begin{pmatrix}
    0&0&i\\
    0&0&0\\
    -i&0&0
    \end{pmatrix},\\
    &T_3^{(ad)}=\begin{pmatrix}
    0&-i&0\\
    i&0&0\\
    0&0&0
    \end{pmatrix}.
\end{split}
\end{equation}
These matrices satisfy $(T_\alpha^{(ad)})_{\beta\gamma}=-i\epsilon_{\alpha\gamma\beta}$, which are the structure constants of $\mathfrak{su}(2)$. Then, we have:
\begin{prop}[Covariance of the dilated Landau-Streater channel] Given the Landau-Streater channel in Eq.~\eqref{eqn:LandauStreater} and its Stinespring dilation in Eq.~\eqref{eqn:LandauV},
\begin{equation}
    \pi_E(g)\otimes\pi_S(g) V=V\pi_S(g).
\end{equation}
$\pi_S(g)$ and $\pi_E(g)$ are defined respectively in Eq.~\eqref{eqn:landauPig} and Eq.~\eqref{eqn:landauCovT}.
\end{prop}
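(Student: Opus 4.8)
The goal is to verify the single operator identity $\pi_E(g)\otimes\pi_S(g)\,V = V\,\pi_S(g)$ on $\mathcal{H}_S$, where $V\ket{\psi} = \tfrac{1}{\sqrt{j(j+1)}}\sum_{\alpha=1}^{3}\ket{\alpha}_E\otimes J_\alpha^{(n)}\ket{\psi}$. The plan is to reduce it to two elementary facts about $\mathfrak{su}(2)$ and then assemble them. First I would record that $V$ is genuinely an isometry, $V^\dagger V = \tfrac{1}{j(j+1)}\sum_\alpha (J_\alpha^{(n)})^2 = \tfrac{1}{j(j+1)}\,j(j+1)\,\mathbb{I}_S = \mathbb{I}_S$ (the Casimir on the spin-$j$ irrep), so that we are indeed dealing with a minimal Stinespring dilation and the covariance to be proven is precisely the instance of Eq.~\eqref{eqn:stinespringCov} guaranteed in general for minimal dilations. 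The two facts are: (i) the generators transform under conjugation by $\pi_S(g)$ in the adjoint representation, i.e. $\pi_S(g)\,J_\alpha^{(n)}\,\pi_S^\dagger(g) = \sum_{\beta}(\pi_E(g))_{\beta\alpha}\,J_\beta^{(n)}$; and (ii) $\pi_E(g)$ is a real orthogonal matrix, so $\sum_\alpha (\pi_E(g))_{\gamma\alpha}(\pi_E(g))_{\delta\alpha} = \delta_{\gamma\delta}$.

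For (i) the key observation is that conjugation acts as $\pi_S(g)\,X\,\pi_S^\dagger(g) = e^{\mathrm{ad}_A}(X)$ with $A = -i\theta\sum_\alpha n_\alpha J_\alpha^{(n)}$ (writing $g$ via $\theta\in\R$ and a unit vector $n$, which is possible for every element of $SU(2)$), and that $\mathrm{ad}$ sees only the Lie bracket of the $J_\alpha^{(n)}$'s; since these satisfy the standard $\mathfrak{su}(2)$ relations $[J_a^{(n)},J_b^{(n)}] = i\epsilon_{abc}J_c^{(n)}$, the restriction of $e^{\mathrm{ad}_A}$ to the three-dimensional span of the $J_\alpha^{(n)}$'s is represented precisely by the matrix $\exp\big(-i\theta\sum_\alpha n_\alpha T_\alpha^{(ad)}\big) = \pi_E(g)$, which is the content of $T_\alpha^{(ad)}$ carrying the structure constants. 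For (ii), the matrices $-iT_\alpha^{(ad)}$ are real and antisymmetric, so $\pi_E(g)$ lies in $SO(3)$. With these in hand, assembling is short: acting on an arbitrary $\ket{\psi}\in\mathcal{H}_S$,
\[
\pi_E(g)\otimes\pi_S(g)\,V\ket{\psi} = \tfrac{1}{\sqrt{j(j+1)}}\sum_{\alpha}\big(\pi_E(g)\ket{\alpha}_E\big)\otimes\big(\pi_S(g)J_\alpha^{(n)}\ket{\psi}\big),
\]
and substituting $\pi_S(g)J_\alpha^{(n)} = \big(\sum_\gamma(\pi_E(g))_{\gamma\alpha}J_\gamma^{(n)}\big)\pi_S(g)$ from (i), then using $\sum_\alpha(\pi_E(g))_{\gamma\alpha}\,\pi_E(g)\ket{\alpha}_E = \ket{\gamma}_E$ from (ii), one lands exactly on $\tfrac{1}{\sqrt{j(j+1)}}\sum_\gamma\ket{\gamma}_E\otimes J_\gamma^{(n)}\pi_S(g)\ket{\psi} = V\pi_S(g)\ket{\psi}$.

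I do not expect a real obstacle here; the only place where care is genuinely needed is the bookkeeping of $SU(2)$ conventions — making sure that the displayed $T_\alpha^{(ad)}$ (and hence $\pi_E(g)$) are exactly the matrices that implement $\mathrm{ad}$ on the span of the $J_\alpha^{(n)}$'s with the correct sign and index ordering, rather than their transpose or mirror (the entrywise formula must be the one consistent with $[J_a^{(n)},J_b^{(n)}]=i\epsilon_{abc}J_c^{(n)}$). A painless way to pin this down once and for all is the $j=1/2$ check: $e^{-i\theta\sigma_z/2}\,\sigma_x\,e^{i\theta\sigma_z/2} = \cos\theta\,\sigma_x + \sin\theta\,\sigma_y$ exhibits the conjugation matrix as the rotation by $+\theta$ about $\hat z$, which is $\exp(-i\theta T_3^{(ad)})$; the general-$j$ statement then follows from the commutation relations alone. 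Finally, it is worth remarking that, since the adjoint representation of $SU(2)$ is irreducible and nontrivial, $\pi_E$ has no invariant vector; this is precisely why this dilation — in contrast with the phase-covariant case of \textbf{Example 1} — cannot be turned into a physical dilation with an invariant environment state, and hence supports no strong symmetry of the system--environment unitary, which is the point of \textbf{Example 7}.
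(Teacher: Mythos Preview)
Your proof is correct and follows essentially the same route as the paper's: both arguments reduce the identity to (a) the fact that conjugation by $\pi_S(g)$ acts on the span of the $J_\alpha^{(n)}$'s via the adjoint (rotation) matrix $\pi_E(g)$, and (b) orthogonality of that matrix, then contract the two copies of $\pi_E(g)$ to the identity. The only cosmetic difference is packaging---the paper works with the block-column form of $V$ and shows $\pi_E(g)\otimes\pi_S(g)\,V\,\pi_S^\dagger(g)=V$, whereas you act on a generic $\ket{\psi}$; your added isometry check and the $j=1/2$ sanity test for the index conventions are helpful but not required.
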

\begin{proof}
For simplicity, from now on we will adopt Einstein's convention for summations over repeated indices. We observe:
\begin{equation}
V\pi_S^\dagger(g)=\frac{1}{\sqrt{j(j+1)}}\begin{pmatrix}
    J_1^{(n)}\pi_S^\dagger(g)\\
    
    J_2^{(n)}\pi_S^\dagger(g)\\
    
    J_3^{(n)}\pi_S^\dagger(g)\\
    \end{pmatrix}.    
\end{equation}
Therefore,
\begin{equation}
\begin{split}
   &\pi_E(g)\otimes\pi_S(g)V\pi_S^\dagger(g)
   \\
   &=\frac{1}{\sqrt{j(j+1)}}\begin{pmatrix}
    (\pi_E(g))_{1\alpha}\pi_S(g)J_\alpha^{(n)}\pi_S^\dagger(g)\\
    (\pi_E(g))_{2\alpha}\pi_S(g)J_\alpha^{(n)}\pi_S^\dagger(g)\\
    (\pi_E(g))_{3\alpha}\pi_S(g)J_\alpha^{(n)}\pi_S^\dagger(g)\\
    \end{pmatrix}.     
\end{split}
\end{equation}
Moreover, we know that $\pi_S(g)J_\alpha^{(n)}\pi_S^\dagger(g)=Q^\dagger_{\alpha\beta}J_\beta^{(n)}$, where $Q_{\alpha\beta}$ is the matrix describing the rotation in $\R^3$ with angle $\theta$ and unit vector $\mathbf{n}$ \cite{Filippov2019}. Additionally, the group elements generated by the adjoint representation $T_\alpha^{(ad)}$ transform like a representation of $SO(3)$, i.e., they generate the same rotation matrix: $(\pi_E(g))_{\alpha\beta}=Q_{\alpha\beta}$. Hence, we find $(\pi_E(g))_{\gamma\alpha}\pi_S(g)J_\alpha^{(n)}\pi_S^\dagger(g)=Q_{\gamma\alpha}Q^\dagger_{\alpha\beta}J_\beta=\delta_{\gamma\beta}J_\beta$, i.e., $\pi_E(g)\otimes\pi_S(g)V\pi_S^\dagger(g)=V$, which proves the assertion.
\end{proof}

The adjoint representation of $SU(2)$ is always irreducible as $\mathfrak{su}(2)$ is simple \cite{cornwell}. Therefore, it cannot be decomposed as the direct sum of the trivial representation and some other representations. Thus, we conclude that any physical dilation with $U_I$ of the Landau-Streater channel that is equivalent to the Stinespring dilation associated with $V=U_I\ket{\psi_E}$ and a $3-$dimensional environment, with $V$ given by Eq.~\eqref{eqn:LandauV}, cannot have a strong $SU(2)$ symmetry, as there is no state of the environment $\ket{\psi_E}$ that is invariant under the action of $\pi_E(g)$ in Eq.~\eqref{eqn:landauCovT} for any $g\in SU(2)$. So, we have found an example of a Stinespring dilation of a weakly symmetric quantum map that cannot be associated with any strongly symmetric physical dilation. 

\section{Conclusions}
\label{sec:conclusions}
In this paper, we have established a connection between between the theory of physical dilations of quantum maps and the theory of symmetries in open systems. Specifically, we have explored the constraints that the presence of a weak symmetry of the dynamics imposes on \textit{any} possible physical dilation of the corresponding quantum map $\phi(t)$. Our results show that for certain typical structures of time-dependent dilations, a weak symmetry necessarily implies a strong symmetry within the subspace spanned by the dynamics. However, this is not a universal result: in other types of dilations, such as the common case of time-independent dilations, a strong symmetry may neither be required nor guaranteed. Nevertheless, it remains true that a symmetric physical dilation for any weakly symmetric map can always be constructed.

In particular, if the physical dilation is driven by the standard unitary evolution operator arising from a time-independent Schrödinger equation $U_I(t)=\exp(-i H_I t)$, then a weak symmetry of the map $\phi(t)$  \textit{always} induces a strong symmetry of $H_I$ in the subspace spanned by the dynamics (\textbf{Proposition 3} and Eq.~\eqref{eqn:SymmHamTimeIndRes}), corresponding to a conserved quantity of the dynamics as expressed in \textbf{Proposition 4}. This subspace is, in general, not the full system-environment Hilbert space 
$\mathcal{H}_{SE}$, but it can be nicely characterized in terms of a sum of Krylov subspaces of maximal dimension according to Eq.~\eqref{eqn:defKrylovParallel}. \textbf{Example 1} and \textbf{Example 2} illustrate these general findings. 

The same results hold for the short-time limit of collision models, discussed in Sec.~\ref{sec:collisionMod} and Appendix~\ref{sec:collLindblad}, which is highly relevant for quantum simulations of the widely employed quantum Markovian semigroup. In this case, however, the structure of the subspace spanned by the dynamics is slightly modified, as given in Eq.~\eqref{eqn:krylovColl} and demonstrated in \textbf{Example 3}: we only have to consider order-2 Krylov subspaces in its construction. For both scenarios, there is always a clear way of extending the symmetry to the full Hilbert space $\mathcal{H}_{SE}$ through the characterization in terms of Krylov subspaces, as outlined in \textbf{Proposition 5} and \textbf{Proposition 6}. 

A strong symmetry is always guaranteed also for any physical dilation driven by a time-dependent Schrödinger equation. However, in this scenario we lose the characterization in terms of Krylov subspaces, and the subspace spanned by the dynamics is expressed in general terms according to Eqs.~\eqref{eqn:definitionParalSubspace} and~\eqref{eqn:symmGenericEv}.

In contrast, a strong symmetry of the dynamics in any subspace of $\mathcal{H}_{SE}$ is not a direct consequence of a weakly symmetric map if the latter does not depend on time, or if the physical dilation is driven by a generic $U_I(t)$ continuous in time that is not generated by a Schrödinger equation. A strong symmetry in a reduced subspace of $\mathcal{H}_{SE}$ can still emerge under certain conditions, namely if the initial state of the environment is invariant under the action of the symmetry group, as stated in \textbf{Proposition 7} and Eq.~\eqref{eqn:symmNonHamEv}. \textbf{Examples 4, 5, 6,} and~\textbf{7} illustrate these cases, highlighting both the emergence of strong symmetries and instances where no symmetry arises. We have also shown that some specific Stinespring dilations, such as the covariant dilation of the Landau-Streater channel in \textbf{Example 7}, cannot give rise to any strongly symmetric physical dilations due to the structure of their representation on the space of the environment.   

The findings summarized above provide a definitive answer to the research question illustrated in Fig.~\ref{fig:figure} for finite-dimensional systems. These results are particularly relevant for those aiming to implement physical dilations of quantum maps $\phi(t)$, e.g., for the sake of quantum simulations of open system dynamics. In the most common scenarios—dilations driven by a time-independent Hamiltonian or short-time collision models—we have demonstrated that the dilated dynamics is always subject to a hard constraint: it must display a strong symmetry with respect to the same symmetry group as the weak symmetry of $\phi(t)$. This imposes limitations on how physical dilations can be constructed; for example, it restricts the gate sequences available for implementing collision models that simulate covariant Markovian evolutions on a quantum computer.

A natural extension of this work would be to explore infinite-dimensional systems. We believe that, in practice, our results can often be applied straightforwardly even when the system is infinite-dimensional. Consider, for instance, the commonly encountered semigroup dynamics of a quantum harmonic oscillator, which is critical in setups like the dissipative Jaynes-Cummings model. The latter is widely used in quantum technologies, including circuit quantum electrodynamics and superconducting qubits \cite{Larson2021, Blais2020}. In practice, the bath inducing the semigroup evolution of the Jaynes-Cummings model is thermal at a finite temperature $T$. As a result, we can safely truncate the Hilbert space of the system at a certain level, depending on the temperature, with higher $T$ requiring more levels. Then, all the results of this paper can be immediately
applied to this truncated Hilbert space. However, more refined mathematical tools may be needed to generalize these findings to truly infinite-dimensional systems. 

In conclusion, this work establishes some fundamental principles for constructing physical dilations of covariant quantum maps, offering clear insights into when weak symmetries in the dynamics translate into strong symmetry constraints at the level of dilations, and when they do not. These findings hold significant theoretical value across various active fields, including quantum thermodynamics, quantum resource theory, and catalysis in quantum information. For instance, a direct consequence of our results is that, for many common dilations, a resource theory  imposing a weak symmetry property on the free maps can be mapped to a corresponding resource theory at the level of the dilation, now based on a strong symmetry. Beyond theoretical implications, our results have direct relevance for cutting-edge technological applications, such as quantum simulations on near-term quantum computers, where understanding the constraints and limitations due to symmetry is crucial for efficiently designing and optimizing suitable quantum algorithms.

\acknowledgements{We would like to thank Nicola Pranzini for reading the manuscript and, together with Otto Veltheim, for some useful tips on the proof of the stationary covariance property, as well as Sergey Filippov for suggesting the example on the Landau-Streater channel. We also acknowledge interesting discussions with Guillermo García-Pérez, Gian Luca Giorgi, Sabrina Maniscalco, Roberta Zambrini, and Paolo Zanardi. We acknowledge funding from the Research Council of Finland through the Centre of Excellence program grant 336810 and the Finnish Quantum Flagship project 358878 (UH), and from COQUSY project PID2022-140506NB-C21 funded by MCIN/AEI/10.13039/501100011033.}

\appendix

\section{A more rigorous formulation of Stinespring's theorem}
\label{sec:FormalStinespring}
While for all the practical purposes of this paper, with a finite-dimensional $\mathcal{H}_S$, we can employ the simplified expression for Stinespring's \textbf{Theorem 1} in Sec.~\ref{sec:stinespring}, we point out that its rigorous formulation is a bit more refined. Here, we follow the presentation given by Paulsen \cite{Paulsen2003}.   

First of all, the map $\phi$ does not need to be trace-preserving for Stinespring's theorem to hold, while complete positivity is sufficient. Moreover, the dual map formally acts as $\phi^\dagger:\mathcal{A}\rightarrow\mathcal{B}(\mathcal{H}_S)$, where $\mathcal{A}$ is a unital $C^*$-algebra acting on $\mathcal{H}_S$, i.e., the algebra of bounded operators on $\mathcal{H}_S$ closed under the adjoint operators, which contains the identity. For the sake of our work, we can identify $\mathcal{A}$ as $\mathcal{B}(\mathcal{H}_S)$. Moreover, the rigorous formulation of Stinespring's theorem makes use of a unital $*$-homomorphism $\eta:\mathcal{A}\rightarrow\mathcal{B}(\mathcal{H}_{SE})$, which is a transformation that conserves the algebra structure and the identity: $\eta(kA)=k\eta(A)$ for $k\in \C$, $\eta(AB)=\eta(A)\eta(B)$, $\eta(A+B)=\eta(A)+\eta(B)$, $\eta(A^\dagger)=\eta(A)^\dagger$, and $\eta(\mathbb{I})=\mathbb{I}$. $\mathcal{H}_{SE}$ is a generic Hilbert space that is larger than $\mathcal{H}_S$. With these ingredients, the Stinespring dilation in Eq.~\eqref{eqn:stinespring} is expressed as:
\begin{equation}
    \label{eqn:StinespringFormal}
    \phi^\dagger[A]=V^\dagger \eta[A] V,
\end{equation}
for the isometry $V$ and any $A\in\mathcal{A}$.

It can be shown that $\mathcal{H}_{SE}$ always has the tensor product structure $\mathcal{H}_S\otimes\mathcal{H}_E$, and that the requirement on $\eta$ as a unital $*$-homomorphism leads to $\eta(A)=A\otimes\mathbb{I}_E$ up to unitary transformations \cite{Keyl1999}, which is the formulation we have employed in the main text. Finally, note that if $\mathcal{D}_1=(\mathcal{H}_{SE}^{(1)},\eta_1,V_1)$ and $\mathcal{D}_2=(\mathcal{H}_{SE}^{(2)},\eta_2,V_2)$ are two minimal dilations  connected by the unitary $U$, then $V_2=UV_1$ and $U\eta_1U^\dagger=\eta_2$.

\section{Quantum dynamical semigroups}

\subsection{Definition}
\label{sec:quantumDynSemig}
A time-independent quantum dynamical semigroup is a family of quantum map $\phi(t)$ continuous in time that can be written as:
\begin{equation}
\label{eqn:quantumDynSemi}
    \phi(t)=\exp(\mathcal{L}t),
\end{equation}
where $\mathcal{L}$ is the generator (or \textit{Liouvillian}) of the semigroup. The master equation for the evolution of the state of the system simply reads
\begin{equation}
\label{eqn:masterEq}
    \frac{d}{dt}\rho_S(t)=\mathcal{L}[\rho_S(t)].
\end{equation}
This definition can be extended to time-dependent semigroups where $\mathcal{L}$ depends on time \cite{Chruscinski2022}. Quantum dynamical semigroups are usually referred to as ``Markovian''  due to their divisibility properties and the lack of traceable memory effects in the environment, although care must be taken when using the term ``Markovian'' with respect to its usual meaning in classical stochastic processes \cite{Wißmann2015,Chruscinski2022,Chruscinski2023,Canturk2024}.

A highly celebrated result is the geometrical characterization of the generator of any quantum dynamical semigroup in Eq.~\eqref{eqn:quantumDynSemi}, which can be written in the (non-diagonal) Gorini-Kossakowski-Sudarshan-Lindblad form (GKLS) \cite{breuer2002theory,Alicki2007,rivas2012open}:
\begin{equation}
\label{eqn:GKLSnondiag}
\begin{split}
    \mathcal{L}[\rho_S]=&-i[H_S,\rho_S]\\&+\sum_{j,k} \gamma_{jk}\left(A_j \rho_SA_k^\dagger-\frac{1}{2}\{A_k^\dagger A_j,\rho_S\}\right).
\end{split}
\end{equation}
$A_j$ are a family of operators that for convenience can be taken as orthonormal with respect to the Hilbert-Schmidt product, while $\gamma_{jk}$ is a positive matrix, and $H_S$ is a generic Hermitian operator acting on $\mathcal{H}_S$. 

Equivalently, we can write $\mathcal{L}$ in diagonal form by diagonalizing $\gamma_{jk}$ and finding a suitable family of \textit{Lindblad} operators $L_j$ \cite{Alicki2007}:
\begin{equation} 
    \label{eqn:GKLSdiag}
    \begin{split}
    \mathcal{L}[\rho_S]=&-i[H_S,\rho_S]\\
    &+\sum_{j} \Gamma_{j}\left(L_j \rho_SL_j^\dagger-\frac{1}{2}\{L_j ^\dagger L_j,\rho_S\}\right).
    \end{split}
\end{equation}
$\Gamma_j$ are the \textit{decay rates} of the dynamics.

\subsection{Derivation through a collision model}
\label{sec:collLindblad}
The importance of the semigroup dynamics in Eq.~\eqref{eqn:quantumDynSemi} cannot be overstated, as it describes the evolution of open systems weakly coupled to memoryless environments through an interaction that is captured by a physical dilation as in Eq.~\eqref{eqn:quantumMapPhysDilMixedState}. For instance, this formalism is widely employed in the field of quantum thermodynamics. For the environment to be memoryless, it needs to be infinite dimensional, so it can be shown that the physical dilations of quantum dynamical semigroups according to Eq.~\eqref{eqn:quantumMapPhysDilMixedState} cannot be constructed with a finite-dimensional environment \cite{VomEnde2023}. In contrast, we can implement finite-dimensional physical dilations of quantum dynamical semigroups by taking the short-time limit of collision models introduced in Sec.~\ref{sec:collisionMod} (this equivalence is exact, obviously, only for an infinitesimal timestep $\Delta t)$. Here, we briefly show how to implement the simplest version of this construction.

For simplicity, we neglect the unitary term driven by the Hamiltonian $H_S$ in Eq.\eqref{eqn:GKLSdiag}. We point out this term can be easily included in the collision model, see for instance Refs.~\cite{Landi2014,Lorenzo2017,Cattaneo2022d}. Instead, we focus on the dissipative part only, which is generated by repeated collisions written as a unitary dilations according to Eq.~\eqref{eqn:quantumMapPhysDilMixedState}, with $U_I(\Delta t)=\exp(-i g_I H_I \Delta t)$. $g_I$ is the magnitude of the collision energy. 

A single step of the collision model implements the action $\phi_{\Delta t}[\rho_S(t)]=\rho_S(t+\Delta t)$. Then, we can recover the time derivate of $\rho_S(t)$ continuously in time for $\Delta t\rightarrow 0^+$, according to Eq.~\eqref{eqn:DerivativeCollision}. The action of $\mathcal{L}$ in this limit can then be written as:
\begin{equation}
\label{eqn:liouvillianColl}
    \mathcal{L}[\rho_S]=\frac{\phi_{\Delta t}[\rho_S]-\rho_S}{\Delta t}.
\end{equation}
Note that, for $\Delta t\rightarrow 0^+$, $\mathcal{L}$ should not depend on $\Delta t$.

We now investigate the structure of  $\mathcal{L}$ defined in Eq.~\eqref{eqn:liouvillianColl}. First, we write the interaction Hamiltonian $H_I$ as:
\begin{equation}
\label{eqn:decomHI}
    H_I=\sum_\alpha A_\alpha\otimes B_\alpha,
\end{equation}
where $A_\alpha$ acts on $\mathcal{H}_S$ and $B_\alpha$ on $\mathcal{H}_E$. A requirement for the consistent derivation of $\mathcal{L}$ in the short-time limit is $\Tr_E[[H_I,\rho_S\otimes\rho_E]]=0$ for all $\rho_S$, where $\rho_E$ is the initial state of the environment ancilla in the dilation. To satisfy this condition, which is common for Markovian master equations \cite{breuer2002theory}, 
it is enough to require $\Tr_E[B_\alpha\rho_E]=0$ for all $B_\alpha$.

Next, we take the second-order expansion of $U_I(\Delta t)$ in $\Delta t$ given by Eq.~\eqref{eqn:expansionCollModSecondOrder} and plug it into the dilation Eq.~\eqref{eqn:quantumMapPhysDilMixedState}. We obtain:
\begin{equation}
\label{eqn:derCollExpan}
    \begin{split}
     &\phi_{\Delta t}[\rho_S]=\rho_S-\frac{\Delta t^2g_I^2}{2}\Tr_E[[H_I,[H_I,\rho_S\otimes\rho_E]]]\\
     &+\mathcal{O}(\Delta t^3)\\
     & \approx \rho_S+\gamma\Delta t\Tr_E[H_I\rho_S\otimes\rho_E H_I-\frac{1}{2}\{H_I^2,\rho_S\otimes\rho_E\}],
    \end{split}
\end{equation}
where we took the collision magnitude $g_I$ in such a way that $g_I^2\Delta t\rightarrow \gamma$ (i.e., fast but strong collision), which is a finite number independent of $\Delta t$. Finally, we plug Eq.~\eqref{eqn:derCollExpan} into Eq.~\eqref{eqn:liouvillianColl} and, using Eq.~\eqref{eqn:decomHI}, we find:
\begin{equation}
\begin{split}
    \mathcal{L}[\rho_S]=&\gamma \sum_{\alpha,\alpha'} \mu_{\alpha,\alpha'}\left(A_{\alpha'}\rho_SA_\alpha^\dagger-\frac{1}{2}\{A_{\alpha}^\dagger A_{\alpha'},\rho_S\}\right), 
\end{split}
\end{equation}
with $\mu_{\alpha,\alpha'}=\Tr_E[B_\alpha^\dagger B_{\alpha'}\rho_E]$. The generator $\mathcal{L}$ is in GKLS form according to Eq.~\eqref{eqn:GKLSnondiag}.

\section{Proofs of the results in Sec.~\ref{sec:results}}
\label{sec:derivation}

\subsection{Proof of the stationary covariance property and symmetry of the dynamics for Hamiltonian evolutions}
\label{sec:proofStatCovProp}
For a generic time-dependent Hamiltonian $H_I(t)$ and time evolution operator $U_I(t)$, we have:
\begin{equation}
    \frac{d}{dt}U_I(t)= -i H_I(t) U_I(t).
\end{equation}
Then, we derive Eq.~\eqref{eqn:physDilationCovStarting} with respect to time and obtain:
\begin{equation}
\begin{split}
    &\pi_S(g)\otimes\frac{d\pi_E(g,t)}{dt}U_I(t)\ket{\varphi_S}\otimes\ket{\psi_E}\\-&i\pi_S(g)\otimes\pi_E(g,t)H_I(t)U_I(t)\ket{\varphi_S}\otimes\ket{\psi_E}\\=&-iH_I(t)U_I(t)\pi_S(g) \ket{\varphi_S}\otimes\ket{\psi_E}\\
    =&-iH_I(t)\pi_S(g)\otimes\pi_E(g,t)U_I(t) \ket{\varphi_S}\otimes\ket{\psi_E},
\end{split}    
\end{equation}
where as usual this equation holds for all $t$, all $g\in G$, and all $\ket{\varphi_S}\in\mathcal{H}_S$.

The above equation is equivalent to
\begin{equation}
\label{eqn:commutStationCovOne}
    \restr{\pi_S(g)\otimes\frac{d\pi_E(g,t)}{dt}}{\Sigma_\parallel} =-i\restr{[H_I(t),\pi_S(g)\otimes\pi_E(g,t)]}{\Sigma_\parallel},  
\end{equation}
where $\Sigma_\parallel$ is the subspace spanned by the dynamics defined in Eq.~\eqref{eqn:definitionParalSubspace}. Eq.~\eqref{eqn:commutStationCovOne} resembles a von Neumann equation for the operator $\pi_S(g)\otimes\pi_E(g,t)$, where $\pi_S(g)$ does not depend on time. This means that $\pi_S(g)\otimes\pi_E(g,t)=U_I(t)\pi_S(g) \otimes \pi_E(g,0)U_I^\dagger(t)$ is a solution of Eq.~\eqref{eqn:commutStationCovOne}. Therefore, $H_I(t)$ leaves $\pi_S(g)$ invariant while evolving only $\pi_E(g,t)$, i.e., it does not create entanglement between the system and the environment, which is in contradiction with a non-trivial minimal physical dilation of the quantum map. So we conclude $\frac{d\pi_E(g)}{dt}=0$, proving the stationary covariant property.

Note that, after proving the stationary covariant property, from Eq.~\eqref{eqn:commutStationCovOne} we also immediately obtain the result in \textbf{Proposition 3} and Eq.~\eqref{eqn:SymmHamTimeIndRes}. Another way to show this for time-independent Hamiltonians consists in performing the series expansion of Eq.~\eqref{eqn:covPropertyTimeIndHam} as a function of time and then, after exploiting \textbf{Proposition 2}, equating each term of the same order $r$. These terms span the sum of Krylov subspaces in Eq.~\eqref{eqn:defKrylovParallel}, proving again the assertion.

\subsection{Characterization in terms of $\mathcal{K}_\parallel$ and $\mathcal{K}_\perp$}
\label{sec:geometry}
Once we have defined the subspace spanned by the dynamics according to Eq.~\eqref{eqn:defKrylovParallel} and Appendix~\ref{sec:krylov}, we can suitably construct the orthogonal component $\mathcal{K}_\perp$. We now suppose that there is a vector $\ket{b}\in\mathcal{K}_\perp$. This means that
\begin{equation}
    \bra{b}H_I^r\ket{\varphi_S}\otimes\ket{\psi_E}=0,\;\forall\ket{\varphi_S}\in\mathcal{H}_S,\;\forall r\in\N.
\end{equation}
Now, using the above equation and $H_I^\dagger=H_I$, it is immediate to see that $H_I$ is block diagonal in $\mathcal{K}_\parallel$ and $\mathcal{K}_\perp$. From this construction, it is also clear that any other block-diagonal Hamiltonian $H_I'$ with the same action as $H_I$ on $\mathcal{K}_\parallel$ will give rise to a suitable physical dilation of the map $\phi(t)$. In particular, we can simply construct a block-diagonal Hamiltonian $H_I'$ that acts as the identity on $\mathcal{K}_\perp$. Then, $H_I'$ is symmetric with respect to $U_g$ over the full Hilbert space $\mathcal{H}_{SE}$, proving \textbf{Proposition 6}.

Next, we evaluate the action of $U_g$ on a generic element of $\mathcal{K}_\parallel$:
\begin{equation}
    U_g H_I^r\ket{\varphi_S}\otimes\ket{\psi_E}= H_I^r\pi_S(g)\ket{\varphi_S}\otimes\ket{\psi_E}.
\end{equation}
We have used \textbf{Proposition 3} and \textbf{Proposition 2}. Clearly, $H_I^r\pi_S(g)\ket{\varphi_S}\otimes\ket{\psi_E}\in\mathcal{K}_\parallel$, so the action of $U_g$ on $\mathcal{K}_\parallel$ remains inside $\mathcal{K}_\parallel$. Using the fact that $U_g$ is unitary, we conclude 
\begin{equation}
    \bra{\varphi_S}\otimes\bra{\psi_E}H_I^r U_g \ket{b} =0,\;\forall\ket{\varphi_S}\in\mathcal{H}_S,\;\forall r\in\N,
\end{equation}
i.e., $U_g$ is block diagonal according to Eq.~\eqref{eqn:blockDiagHiUg}.

\section{Krylov subspaces}
\label{sec:krylov}
Krylov subspaces were introduced in 1931 as an iterative method to find the approximate solution of some classes of linear equations \cite{krylov1931}. Since then, Krylov subspaces have being widely used in linear algebra and computer sciences. Recently, they have received attention also in quantum information, as they can be employed to nicely describe the information spreading in the dynamics of many-body systems \cite{Nandy2024,Parker2019}. \textbf{Definition 6} and Eq.~\eqref{eqn:krylovDef} introduce the concept of a $r-$order Krylov subspace with matrix $A$ and vector $\ket{v}$. Here, we provide more details about Krylov subspaces and the characterization in Eq.~\eqref{eqn:defKrylovParallel}.

Given $A$ and $\ket{v}$ in the Hilbert space $\mathcal{H}$, we suppose that the maximal dimension of the Krylov subspace $\mathcal{K}_r(A,\ket{v})$ is $r=r_0$, i.e., the vectors $\ket{v},A\ket{v},\ldots,A^{r-1}\ket{v}$ are linearly independent until $r=r_0$. We call $r_0$ the \textit{Krylov dimension} of $\mathcal{K}(A,\ket{v})$. In other words, $\mathcal{K}_{r_0-1}(A,\ket{v})\subset\mathcal{K}_{r_0}(A,\ket{v})$, while $\mathcal{K}_{r\geq r_0}(A,\ket{v})\equiv\mathcal{K}_{r_0}(A,\ket{v})$.

In general terms, the vectors $\ket{v},A\ket{v},\ldots,A^{r_0-1}\ket{v}$ are linearly independent but not orthonormal with respect to a suitably defined inner product in $\mathcal{H}$. An orthonormal set of vectors spanning $\mathcal{K}_{r_0}(A,\ket{v})$ can be found by means of some routines such as the \textit{Lanczos algorithm} \cite{Nandy2024,lanczos1950iteration}. Note that this vector set is, in general, a basis for $\mathcal{K}_{r_0}(A,\ket{v})$ but not a basis for the full $\mathcal{H}$, since $\mathcal{K}_{r_0}(A,\ket{v})\subseteq\mathcal{H}$.

Note that, quite intuitively, if a vector $\ket{b}$ is in $\mathcal{K}_{r_0}(A,\ket{v})$, then $\mathcal{K}_{r_0^*}(A,\ket{b})\subseteq\mathcal{K}_{r_0}(A,\ket{v})$, where $r_0^*$ is the Krylov dimension of $\mathcal{K}(A,\ket{b})$. Indeed, $\ket{v},A\ket{v},\ldots,A^{r_0-1}\ket{v}$ is a basis of $\mathcal{K}_{r_0}(A,\ket{v})$. Then, $\ket{b}$ can be written as a linear combination of this basis and $A^r$ applied to $\ket{b}$ is still a linear combination of $\ket{v},A\ket{v},\ldots,A^{r_0-1}\ket{v}$ for any $r\in\N$.

We now consider the construction of the sum of Krylov subspaces in Eq.~\eqref{eqn:defKrylovParallel}. We start from the subspace $\Sigma_\parallel$ spanned by the dynamics in Eq.~\eqref{eqn:definitionParalSubspace}. Using the series expansion of $U_I(t)$,
\begin{equation}
\label{eqn:spanKrylovFirst}
\Sigma_\parallel=Span\{\sum_{r=0}^\infty \frac{(-it)^r}{r!}H_I^r\ket{\varphi_S}\otimes\ket{\psi_E},\;\forall\ket{\varphi_S}\in\mathcal{H}_S,\;\forall t\}.
\end{equation} 
Since we consider the vectors in the linear span at any time $t$, the above expression is equivalent to the span of the vectors for all orders of $H_I^r$:
\begin{equation}
\label{eqn:spanKrylovSecond}
\Sigma_\parallel=Span\{H_I^r\ket{\varphi_S}\otimes\ket{\psi_E},\;\forall\ket{\varphi_S}\in\mathcal{H}_S,\;\forall r\in\N\}.
\end{equation}

Fixing a single vector $\ket{\varphi_S}\otimes\ket{\varphi_E}$, Eq.~\eqref{eqn:spanKrylovSecond} is equivalent to $\mathcal{K}_{r_{0}}(H_I,\ket{\varphi_S}\otimes\ket{\varphi_E})$ for a suitable Krylov dimension $r_{0}$. Then, the span in Eq.~\eqref{eqn:spanKrylovSecond} for all vectors in $\mathcal{H}_S$ can be written as a sum\footnote{N.B.: we are speaking about a sum of vector spaces and not about a direct sum. In general terms, $\mathcal{K}_{r_{0,j}}(H_I,\ket{\varphi_{S,j}}\otimes\ket{\varphi_E})$ and $\mathcal{K}_{r_{0,j'}}(H_I,\ket{\varphi_{S,j'}}\otimes\ket{\varphi_E})$ for two different $\ket{\varphi_{S,j}}$ and $\ket{\varphi_{S,j'}}$ can be linearly dependent, or even the same vector space.} of Krylov subspaces starting from a generic basis of $\mathcal{H}_S$, which is $\mathcal{K}_\parallel$ in Eq.~\eqref{eqn:defKrylovParallel}. Indeed, fix a basis $\{\ket{\varphi_{S,j}}\}$ and build $\mathcal{K}_\parallel$ with this basis. Next, consider a vector $\ket{\phi_S}\in\mathcal{H}_S$ which is not an element of the basis. Then, clearly, $\ket{\phi_S}\otimes\ket{\psi_E}\in\mathcal{K}_\parallel$. Now, expanding $\ket{\phi_S}$ as a linear combination of $\{\ket{\varphi_{S,j}}\}$, we observe that $H_I^r\ket{\phi_S}\otimes\ket{\psi_E}$ must belong to the sum of Krylov subspaces $\mathcal{K}_\parallel$ for all $r\in\N$.

\bibliography{biblio}

\end{document}